\newcommand{\n}{\noindent}
\newtheorem{remark}{Remark}[section]
\newtheorem{proposition}{Proposition}[section]
\newtheorem{property}{Property}
\newtheorem{conclusion}{Conclusion}
\newtheorem{example}{Example}[section]
\begin{document}

\date{}
\title{A recursive algorithm for a pipeline maintenance scheduling
problem}
\author{Assia Boumahdaf\thanks{%
e-mail: assia.boumahdaf@gmail.com} \hspace{0.5cm} Michel Broniatowksi%
\thanks{
e-mail: michel.broniatowski@upmc.fr} \\
Laboratoire de Statistique Th\'eorique et Appliqu\'ee, \\ Universit\'e Pierre et Marie Curie}
\maketitle

\begin{abstract}
This paper deals with the problem of preventive maintenance (PM) scheduling of pipelines subject to external corrosion defects. The preventive maintenance strategy involves an inspection step at some epoch, together with a repair schedule. This paper proposes to determine the repair schedule as well as an inspection time minimizing the maintenance cost. This problem is formulated as a binary integer non-linear programming model and we approach it under a decision support framework. We derive a polynomial-time algorithm that computes the optimum PM schedule and suggests different PM strategies in order to assist practitioners in making decision.
\end{abstract}

\section{Introduction}
\label{section:intro}

Gas pipelines are facilities intended for the transport of natural gas at high pressure. Pipelines carry natural gas from the extraction area to export area and are buried under the ground in inhabited zones. A major threat for their integrity is  metal-loss corrosion. To maintain safe pipeline condition, preventive maintenance (PM) is performed. Poor pipeline management  can cause leaks and leads to human and environmental damage, as well as monetary loss. 

As states Zhou in \cite{Zhou2010} a pipeline management program consists firstly in detecting the corrosion defects with appropriate equipment, secondly in evaluating the probability of failure based on the primary inspection results, and lastly in repairing the defects if necessary. Defects which do not call into question system integrity are not immediately repaired, and will be considered for PM, which consists of identifying the next inspection time, together with a repair schedule. This paper deals with the PM problem in gas pipeline from an economic point of view. We aim to investigate the optimal PM schedule which minimizes the operational cost. Large scale maintenance activities have significant cost. There are not only the costs related to inspection and repairs, but also the cost due to  production losses during  maintenance. When maintenance activities are conducted, gas flow in pipelines must be interrupted for security measures, generating a significant out-of-service cost. Hence, the operational cost estimate includes the cost due to  inspection, repairs, and the cost due to out-of-service of gas pipelines.

Several models and methodologies establishing optimal PM schedules can be found in the literature. A number of papers have been published using a reliability approach. In the corroding gas pipeline context, we refer the reader to \cite{SunMaMorris2009} for a recent survey about this subject. In their paper, they address the problem of predicting the reliability of pipelines with imperfect repairs in order to assist pipeline operators in making the most appropriate PM decision. Hong \cite{Hong1999} uses also a reliability analysis to estimate the probability of failure, together with optimal inspection and maintenance schedules, based on minimization of the total cost including  inspection,  repairs, and the cost of failure. In a more general context, Kallen \cite{Kallen2005} determines the optimal inspection and replacement policy which minimize the expected average costs using an adaptive Bayesian decision model.

To the best of our knowledge, there have been no previous economic and deterministic studies on PM of gas pipelines. However, this issue arises in a wide variety of areas, one of which is power plants. This is an important issue because a failure in a power station may cause an overall breakdown, and significant customer dissatisfaction. Canto \cite{Canto2011} considers the problem of selecting the period for which the facilities should be taken offline for preventive maintenance. He models this problem with a binary integer linear program, and solved it using optimization software. The same author \cite{Canto2008} solves the problem using Benders decomposition technique. Megow \cite{Megow2011} derives a model for the problem of turnaround scheduling. They propose an optimization algorithm within a decision support framework that computes the schedule minimizing the resource cost.

Preventive maintenance problems also arise in the medical field. Vlah Jeri\'c and Figueira \cite{VlahJeric2012} consider the issue of scheduling medical treatments for patients. They formulate the problem as a binary integer programming model, and solve it using a meta-heuristic algorithm. \cite{Schimmelpfeng2012} propose a decision support system for resource scheduling. Chern \textit{et al.} \cite{Chern2008} consider health examination scheduling. They model this problem using a binary integer programming model, and propose solutions based on a heuristic algorithm.

There are many other fields that deal with PM scheduling problems; in the military context, we can mention \cite{Joo2009}, who have developed a dynamic approach for scheduling PM of engine modules. Maintenance
scheduling problems involving machines have been investigated by Hadidi \textit{et al.} \cite{Hadidi2012} and Keha \textit{et al.} \cite{Keha2009} for instance, and in a paper factory by Garg \textit{et al.}\cite{Garg2013}.

In this paper, we assume that we have collected information about defects which were not handled during the primary maintenance management program described above by Zhou \cite{Zhou2010}. This information consists of an acceptable limit date for repairs; we call this date \textit{the deadline}. In the rest of this paper, the repairs not handled during the primary inspection, with their associated deadlines, will be called the primary repair schedule, and are the starting point of our study. This paper models the economic preventive maintenance scheduling problem in gas pipelines as a binary integer non-linear programming model, and presents an algorithmic solution based on dynamic programming. This algorithm is performed in polynomial time and computes the global solution; it proposes also a class of alternative solutions which may assist industrial personnel in making decisions.

The remainder of this paper is organized as follows. The model is formulated in Section~\ref{section1:problem description}. Section~\ref{section2:mathematical model} models the PM problem by a binary nonlinear program. Section~\ref{section3:algorithm} focuses on the algorithmic solution. Computational results are presented in Section~\ref{section4:simulations}, and we conclude in Section~\ref{section:conclusion}

\section{Problem description}
\label{section1:problem description}

After a primary inspection defined at time $t=0$, a long term horizon is fixed, which we denote $T^{\ast } \in  \mathbb{N}^{*}$; no repairs will be handled after $T^{\ast}$, which may be thought of as the maximal time before the next inspection. The next inspection may be happen at time $t$, $ t \in \{  1,2, \ldots, T^{\ast }\}$. 
Define the inspection interval by $\Delta t=(0,t]$. During the primary inspection, a number of corrosion defects are detected. Some of them, considered as unacceptable for the safety of the pipeline, are repaired immediately. Those which do not call into question the pipeline integrity are not immediately repaired; to each of these is associated a deadline corresponding to the limit date for repair. So, the pipeline manager must plan repair activities no later than this deadline for each specific defect. Beyond this date, the safety of the pipeline is seriously compromised. Assume that we have knowledge of these deadlines and the number of defects to be repaired for each of these dates. We may thus define what we have called in Section~\ref{section:intro} the \textit{primary repair schedule}:

\begin{equation}  
\label{primary PM}
\mathscr{P}=\left\lbrace(n_1, T_1), \ldots ,(n_N,T_N) \right\rbrace,
\end{equation}

\n 
where $N>0$ is the number of different deadlines. For $i=1,\ldots N $, $n_{i}$ is the number of defects to be repaired before their deadline $T_{i}$, with $n_{i}>0$, $T_{i} < T^{\ast }$ and $T_{1}<\ldots <T_{N}$ (with $ T_{0}=0 $). From the primary repair schedule $\mathscr{P}$ defined by \eqref{primary PM}, we seek  the next optimal inspection time, denoted by $t^{\ast }$, belonging to the set $\{1,\ldots ,T^{\ast }\}$, and the optimal repair program within the
inspection interval $\Delta t^{\ast }$ that minimizes the operational cost.

\subsection{Operational cost}

Preventive maintenance activities include inspection and repairs. Moreover, gas pipelines must be interrupted for safety during  maintenance activities. These activities generate a significant out-of-service cost corresponding to the financial loss due to the inactivity of the gas pipeline. The costs to be considered are the inspection cost, denoted by  $C_{insp}$, the repair cost $C_{rep}$, and the out-of-service cost, $C_{out}$. Hence, the operational cost is

\begin{equation}  
\label{eq1}
C_{tot} = C_{insp} + C_{rep} + C_{out}.
\end{equation}

\begin{remark}
The out-of-service cost $C_{out}$ represents the financial loss due to repair activities. The inspection cost also takes  into account an out-of-service cost. 
\end{remark}

Let $\mathscr{P}=\left\{ (n_{i},T_{i}),i=1,\ldots ,N\right\} $ be the primary repair schedule and let $t$, $t \in \{ 1, \ldots , T^* \}$ be any fixed inspection time. The total cost \eqref{eq1} depends on both $t$ and the repair plan within the inspection interval $\Delta t$. Denote by $N_{t}$ the number of deadlines $T_{i}$, $i\in \{1,\ldots ,N\}$ within $\Delta t$, \textit{i.e.},

\begin{equation}
\label{N_t}
N_{t}=card\left\{ i\in \{1,\ldots ,N\}:T_{i}\leq t\right\} .
\end{equation}

\n 
When the next inspection is planned at time $t$, the operational cost related to $\mathscr{P}$, $C(t, \mathscr{P})$ is given by

\begin{equation}  
\label{total cost C*}
C(t, \mathscr{P})= C_{insp}(t) + \sum_{i=1}^{N_{t}}
n_iC_{rep}(T_i) + \sum_{i=1}^{N_{t}} C_{out}(T_i).
\end{equation}

Let us first make some assumptions about the costs defined above. We will consider two economic parameters: the discount rate and the inflation rate. For a given initial cost, for example, with initial inspection cost denoted $C_{inp}^{0}$, the inspection cost at time $t>0$ will be given by

\begin{equation*}
C_{insp}(t) = C_{inp}^{0}\times \left( \dfrac{1+r_i}{1+r_d}\right)^t,
\end{equation*}

\n 
where $r_{i}$ and $r_{d}$ are respectively the inflation rate and  discount rate. Moreover, we assume that $r_{i}<r_{d}$. Thus, the function $t\mapsto C_{insp}(t)$ is decreasing. This will also be true for $C_{rep}(.)$ and $C_{out}(.)$.

The PM scheduling problem is twofold; the first part consists of selecting the next inspection time within the set $\{ 1, 2, \ldots , T^* \}$, and the second, to plan the repair schedule within $\Delta t$ in order to minimize the total cost. In the following, the optimal solution will be denoted by $(t^*, \mathscr{P}_{t}^{*})$. This problem is highly combinatorial. It consists of finding the next  inspection time within $\{1, \ldots , T^*\}$, and the least expensive repair program among all possible programs achieved from $\mathscr{P}$,  under the constraint that a defect cannot be repaired after its deadline. Thus, finding the exact solution in a short time cannot be expected. However as we will seen in the forthcoming section, by exploiting properties of the model (Section~\ref{sub-model properties}), we will be able to reduce the space of feasible solutions. This combinatorial optimization problem will be modeled using a binary nonlinear programming model in Section~\ref{sub-minlp formulation}, and an effective algorithmic solution will be proposed in Section~\ref{sub-Algo}.

\section{Mathematical model}
\label{section2:mathematical model}

\subsection{Some properties of the model}
\label{sub-model properties}

Before formulating the mathematical model as a binary integer nonlinear program, it is worth noting a number of simple properties. We will use the fact that we can only do repairs early, and not late; the repair and the out-of-service
costs decrease with time; anticipating repairs where other repairs are planned does not add an out-of-service cost. These properties will allow us to reduce the space of repair schedules to explore to $(1-2^{N+1})/(1-2) -1$. Let us introduce some notation that we will use in the following. Denotes by $\mathscr{D}_{k}$ the set of deadlines (including $0$) up to $T_{k}$, for $k=1,\ldots ,N-1$,

\begin{equation*}  \label{deadline}
\mathscr{D}_{k} = \left\lbrace T_1, \ldots , T_k \right\rbrace \cup \{0 \}.
\end{equation*}

\n 
We denote by $\mathscr{D}$ (the subscript $N$ will be omitted) the set of all deadlines (including $0$), \textit{i.e.}, 

\begin{equation*}
\mathscr{D}=\{T_{1},\ldots ,T_{N}\}\cup \{0\}.
\end{equation*}

\begin{property}
All defects with same deadline will be repaired at the same time.
\end{property}

\n 
For example, $3$ defects have to repaired before year $15$. Thus, we will repair the 3 defects at the same time, either at year $0$ (during the primary inspection), at year $1$, $2$, and so on, up to year $15$. This means that we will not split up repairs, since this action increases the total cost, adding a repair cost and/or an out-of-service cost. This fact is expressed in the following proposition.

\begin{proposition}
\label{prop1} 
Let $t$ be an inspection time, $t \in \{ 1, \ldots  T^* \}$, and let $\mathscr{P}=\{(n_{i},T_{i}),i=1,\ldots ,N\}$ be the primary repair program with total cost $ C(t, \mathscr{P})$. Let $N_{t}$ be the number of deadlines within $\Delta t$ defined by \eqref{N_t}. For any set of defects $n_{1},n_{2},\ldots ,n_{N}$, the new repair program defined after splitting a set is more expensive than $\mathscr{P}$.
\end{proposition}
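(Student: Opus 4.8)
The plan is to compare the cost \eqref{total cost C*} of $\mathscr{P}$ with the cost of any program obtained by splitting one of the deadline groups and to show the difference is strictly positive. First I would record the cost of a general feasible program: if it repairs defects at the distinct epochs $s_1 < \cdots < s_M$ inside $\Delta t$, with $a_j$ defects at $s_j$, its cost is
\begin{equation*}
C_{insp}(t) + \sum_{j=1}^{M} a_j\, C_{rep}(s_j) + \sum_{j=1}^{M} C_{out}(s_j),
\end{equation*}
the single term $C_{out}(s_j)$ per epoch reflecting that simultaneous repairs share one interruption (Property~1 together with the remark that anticipating a repair onto an already-scheduled epoch adds no out-of-service cost). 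Since $C_{insp}(t)$ depends only on $t$, it cancels in every comparison, so it suffices to track the repair and out-of-service terms.

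Next I would isolate an \emph{atomic} split: fix a deadline $T_k \le t$ and move $m$ of its $n_k$ defects, with $0 < m < n_k$, from $T_k$ to an earlier feasible time $s < T_k$ (admissible because repairs may only be anticipated, never postponed). The repair term then changes by $m\bigl(C_{rep}(s) - C_{rep}(T_k)\bigr) > 0$, strictly positive since $C_{rep}$ is decreasing and $s < T_k$. For the out-of-service term the key observation is that, because $m < n_k$, the group at $T_k$ is not emptied, so the epoch $T_k$ survives and its term $C_{out}(T_k)$ is retained; the only change is at $s$, which either coincides with an existing epoch (no change) or is new (adding $C_{out}(s) > 0$). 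Hence the out-of-service change is non-negative, and the total change is at least $m\bigl(C_{rep}(s) - C_{rep}(T_k)\bigr) > 0$.

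Finally I would note that any splitting of the groups $n_1,\ldots,n_N$ is a finite composition of such atomic moves, each strictly raising the cost, so summing the increments (or a short induction on the number of moves) gives $C(t,\cdot) > C(t,\mathscr{P})$ for the split program. The subtlety I expect to be the real obstacle is exactly the role of the strict inequality $m < n_k$: it guarantees that a split never \emph{removes} an epoch and therefore never produces an out-of-service saving. Were an entire group relocated ($m = n_k$) onto an earlier existing epoch, the term $C_{out}(T_k)$ would disappear and the net change $n_k\bigl(C_{rep}(s)-C_{rep}(T_k)\bigr) - C_{out}(T_k)$ could be negative — this is precisely the consolidation move the optimizer will later exploit, and it is what must be excluded to make the proposition hold. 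Once this point is pinned down the monotonicity of $C_{rep}$ does the real work and the induction is routine; anticipating a defect from $T_k$ to $s < T_k \le t$ keeps it inside $\Delta t$, so the summation range is unaffected.
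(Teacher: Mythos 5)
Your argument is essentially the paper's: a direct comparison of \eqref{total cost C*} before and after moving $n_k' = m$ defects from $T_k$ to an earlier time, split into the same two sub-cases (the target time is a new epoch, adding $C_{out}$, or an already-scheduled one, adding nothing), with the monotonicity of $C_{rep}$ supplying the strict increase. Your observation that $m < n_k$ is what keeps the epoch $T_k$ alive and hence prevents an out-of-service saving is exactly the right diagnosis of why splitting is always bad while wholesale consolidation (Proposition~\ref{prop3}) is not, and your atomic-move induction is a slightly cleaner packaging of the paper's ``prove for one set, apply to several.''

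There is one case you do not cover: you fix $T_k \le t$ and remark that the summation range in \eqref{total cost C*} is unaffected, but the proposition also allows splitting a group whose deadline lies \emph{beyond} the inspection time ($k > N_t$), with the anticipated sub-group moved into $\Delta t$. There the summation range \emph{is} affected: the $n_k$ defects contributed nothing to $C(t,\mathscr{P})$, and the move injects a new repair term $m\,C_{rep}(s)$ (plus possibly $C_{out}(s)$) into the cost, which is again a strict increase. The paper treats this case explicitly in its final paragraph; your framework handles it just as easily, but as written your proof excludes it. Adding that one sentence closes the gap.
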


\begin{proof}
We will prove this proposition by considering only one set $n_k$, with $k \in \{ 1, \ldots  N\}$. The result will be applied when considering several sets. We consider the $k$th set of defects $n_k$, such that $n_k \geq 2$, which must be repaired before $T_k$. Suppose that we have split the set $n_k$ into two other sets, $n_k^{'}$ and $n_k^{''}$, such that $n_k = n_k^{'} + n_k^{''}$, where $n_k^{'} \geq 1$ is the number of repairs that will be performed early at  time $\tau$ with $\tau < T_k$. Without loss of generality, assume that the $n_k^{''} \geq 1$ defects will be repaired at time $T_k$. We assume initially that $k \leq N_{t}$, i.e., the $n_k$ defects are within $\Delta t$. 

$\bullet$ If $\tau \notin \mathscr{D}_{k-1}$, define the new repair schedule as
\begin{equation}
\label{new PM1 prop1}
  \mathscr{P}_{1}=\left\lbrace (n_1,T_1),\ldots ,(n_k^{'}, \tau),\ldots,(n_k^{''}, T_k), (n_{k+1}, T_{k+1}), \ldots ,(n_N, T_N)\right\rbrace.
\end{equation}

\n
Noting that planning repairs before deadlines add an out-of-service cost,  the total cost of $\mathscr{P}_{1}$ is given by

\begin{equation*}
\label{Ctot PM1 prop1}
  C(t, \mathscr{P}_{1}) = C_{insp}(t) + \sum_{i=1}^{N_{t}}n_iC_{rep}(T_i) - n_k^{'}C_{rep}(T_k)+ n_k^{'}C_{rep}(\tau) + \sum_{i=1}^{N_{t}} C_{out}(T_i) + C_{out}(\tau).
\end{equation*}

\n
Using the fact that $C_{rep}(T_k) < C_{rep}(\tau)$, we get 

\begin{equation*}
C(t, \mathscr{P}) - C(t, \mathscr{P}_{1}) = n_k^{'}[C_{rep}(T_k) - C_{rep}(\tau)] - C_{out}(\tau) < 0.
\end{equation*}

$\bullet$ If $\tau \in \mathscr{D}_{k-1}$, then there exists $i$ such that $\tau = T_i< T_k$. The new repair plan is given by 

\begin{equation}
\label{new PM2 prop1}
  \mathscr{P}_2 =\left\lbrace (n_1,T_1),\ldots, (n_i+ n_k^{'},T_i),\ldots ,(n_k^{''}, T_k), (n_{k+1}, T_{k+1}), \ldots ,(n_N, T_N)\right\rbrace.
\end{equation}

\n
Remark next that  repairs made earlier than absolutely necessary at a time where other repairs are planned does not add an out-of-service cost. Thus, the total cost of $\mathscr{P}_{2}$ is given by
\begin{equation*}
\label{Ctot PM2 prop1}
C(t, \mathscr{P}_{2}) = C_{insp}(t) + \sum_{i=1}^{N_{t}}n_iC_{rep}(T_i) - n_k^{'}C_{rep}(T_k)+ n_k^{'}C_{rep}(T_i) + \sum_{i=1}^{N_{t}}C_{out}(T_i).
\end{equation*}

\n
Thus,
\begin{equation*}
C(t ,\mathscr{P}) - C(t, \mathscr{P}_{2}) = n_k^{'}[C_{rep}(T_k) - C_{rep}(T_i)] < 0.
\end{equation*}

\n
Assume now that $k > N_{t}$, i.e., the set $n_k$ is not within $\Delta t$, but we have the opportunity to perform early the set $n_k^{'}$ within $\Delta t$. In this case, all repair schedules that we can define will be more expensive than $\mathscr{P}$ because we add to $\mathscr{P}$ an out-of service cost depending on whether $\tau \in \mathscr{D}_{N_{t}}$ or not, and on the monotonicity of $C_{rep}(.)$. Therefore, in all cases the cost of $\mathscr{P}$ is less expensive than all repair plans defined by splitting.
\end{proof}

\begin{remark}
We have supposed that the $n_{k}^{^{\prime \prime }}$ repairs take place at $T_{k}$. We could have decided to  repairs early, but the associated repair schedule would be more expensive than \eqref{new PM1 prop1} and \eqref{new PM2 prop1}.
\end{remark}

\begin{remark}
We have added an out-of-service cost at time $\tau $ to \eqref{Ctot PM1 prop1} because no repair was planned at time $\tau $. Thus, moving forward repairs from deadlines generates a cost due to the unavailability of gas from the pipeline.
\end{remark}

\begin{remark}
\label{rem-restricted PM} 
For a given inspection time $t$, when the set of  $n_k$ defects is not in  $\Delta t$ (i.e., $k > N_{t}$), moving some  repairs into the inspection interval is more expensive than repairing the set $n_k$ at time $T_k$, because we add an out-of-service cost, and $C_{rep}(.)$ is decreasing.
\end{remark}

\begin{remark}
We  also observe that the first part of the proof suggests that  repairs from after deadlines done early generate more expensive repair plan schedules than $\mathscr{P}$. This property will be proved below.
\end{remark}

\begin{conclusion}
\label{cl1}
Proposition~\ref{prop1} allows for all split scenarios to be rejected, thus  reducing the space of feasible solutions. Moreover, we can restrict our attention to deadlines within $\Delta t$ in order to build the optimal repair senario.
\end{conclusion}

\begin{property}
There is no monetary advantage to repair defects outside the times $\mathscr{D}$.
\end{property}

\n 
For example, we fix the horizon time $T^* = 20$, and the next inspection at $t=18$. Denote by $\mathscr{P} = \left\lbrace (1, 3), (2,10), (3, 15)\right\rbrace$ the primary
repair schedule with total cost $C(t,\mathscr{P})$ given by

\begin{equation*}
C(18,\mathscr{P}) = C_{insp}(18)+ C_{rep}(3)+2 C_{rep}(10)+3 C_{rep}(15)+ C_{out}(3)+
C_{out}(10)+C_{out}(15).
\end{equation*}

\n 
Let $\mathscr{P}_1$ be a new repair schedule where we have repaired early the third set of defects with deadline $15$ at year $\tau$ such that $\tau \notin \mathscr{D}_2 = \left\lbrace 3, 10 \right\rbrace \cup \{0\}$ and $\tau < 15$. The total cost of $\mathscr{P}_1$ is

\begin{equation*}
C(18, \mathscr{P}_1) = C_{insp}(18)+ C_{rep}(3)+2C_{rep}(10)+3
C_{rep}(\tau)+ C_{out}(3)+ C_{out}(10)+C_{out}(\tau).
\end{equation*}

\n Hence,
\begin{equation*}
C(18, \mathscr{P}) - C(18, \mathscr{P}_1) = 3\underbrace{\left( C_{rep}(15) -
C_{rep}(\tau) \right)}_{< 0} + \underbrace{\left(C_{out}(15) -C_{out}(\tau)
\right)}_{< 0}.
\end{equation*}

\n This leads to the following Proposition.

\begin{proposition}
\label{prop2} 
Let $t$ be a fixed inspection time, with $t \in \{ 1, \ldots , T^* \}$, and let $\mathscr{P}$ be the primary repair program with total cost $C(t, \mathscr{P})$. Then, the total cost related to any
repair schedules such that some defects are repaired before their deadline, is more expensive than $\mathscr{P}$.
\end{proposition}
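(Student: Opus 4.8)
The plan is to follow the template of the proof of Proposition~\ref{prop1}. By Conclusion~\ref{cl1} every splitting scenario has already been discarded, so I only need to compare $\mathscr{P}$ against schedules obtained by relocating \emph{entire} sets of defects to earlier dates. By additivity of the cost over the repair epochs it suffices to treat the relocation of a single set $n_k$ from its deadline $T_k$ to an earlier time $\tau<T_k$ with $\tau\notin\mathscr{D}$ (an off-deadline date, exactly the situation of the worked example above); the general case of several relocated sets then follows by applying the single-set estimate to each of them in turn. If the relocated set lies outside $\Delta t$ (i.e.\ $k>N_t$), the comparison is even more favourable to $\mathscr{P}$, since that set contributes $0$ to $C(t,\mathscr{P})$ and a strictly positive amount after relocation, recovering Remark~\ref{rem-restricted PM}; so the substantive case is $k\le N_t$.

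First I would write down the two schedules explicitly. Since the deadlines $T_1<\dots<T_N$ are distinct, in $\mathscr{P}$ the date $T_k$ carries the set $n_k$ and nothing else, so relocating this set empties $T_k$: the new schedule $\mathscr{P}_1$ loses the contribution $n_kC_{rep}(T_k)+C_{out}(T_k)$ and, because $\tau\notin\mathscr{D}$ is a fresh epoch at which no repair was planned, gains the contribution $n_kC_{rep}(\tau)+C_{out}(\tau)$, all other terms being unchanged. I would then read off
\[
C(t,\mathscr{P})-C(t,\mathscr{P}_1)=n_k\bigl[C_{rep}(T_k)-C_{rep}(\tau)\bigr]+\bigl[C_{out}(T_k)-C_{out}(\tau)\bigr].
\]
Both brackets are strictly negative: under the standing assumption $r_i<r_d$ the maps $C_{rep}(\cdot)$ and $C_{out}(\cdot)$ are decreasing, and $\tau<T_k$, whence $C_{rep}(T_k)<C_{rep}(\tau)$ and $C_{out}(T_k)<C_{out}(\tau)$. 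Therefore $C(t,\mathscr{P})<C(t,\mathscr{P}_1)$, which is the asserted strict inequality and which reproduces the sign computation of the example with $T_k=15$.

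The step I expect to be the main obstacle is the bookkeeping of the out-of-service cost, which behaves differently here than in Proposition~\ref{prop1}. In the splitting case the fraction $n_k''$ left at $T_k$ kept the out-of-service cost $C_{out}(T_k)$ alive, so only $C_{out}(\tau)$ was added; here the whole set leaves $T_k$, so $C_{out}(T_k)$ is \emph{removed} while $C_{out}(\tau)$ is added, and one must verify that the net out-of-service change, together with the repair-cost change, still favours $\mathscr{P}$. This is precisely where the hypothesis $\tau\notin\mathscr{D}$ is essential: when $\tau$ is off-deadline both terms point the same way, whereas if the destination were itself a deadline the saving $-C_{out}(T_k)$ from the emptied date could dominate the repair-cost increase and make relocation cheaper. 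That phenomenon is exactly the consolidation of repairs that the optimisation of Section~\ref{section3:algorithm} is designed to exploit, so it lies outside the scope of the present statement; I would make this boundary explicit and, for several simultaneously relocated sets, note that the single-set estimate applies verbatim provided the fresh off-deadline epochs are pairwise distinct, any coincidence again being a consolidation rather than an off-deadline waste.
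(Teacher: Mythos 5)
Your proof follows essentially the same route as the paper's: reduce to a single set $n_k$ relocated to an off-deadline epoch $\tau\notin\mathscr{D}_{k-1}$ with $\tau<T_k$, write out the two total costs, and conclude from the monotonicity of $C_{rep}(\cdot)$ and $C_{out}(\cdot)$, with the multi-set case handled by iterating the single-set estimate. Your bookkeeping of the out-of-service term, $C_{out}(T_k)-C_{out}(\tau)$, actually matches the paper's worked example (the one with $T_k=15$) and is the correct form, whereas the paper's displayed difference writes $-C_{out}(\tau)$ instead --- a harmless slip since both expressions are negative --- and your explicit restriction to $\tau\notin\mathscr{D}$ (deferring relocation onto earlier deadlines to Proposition~\ref{prop3}) makes precise a scope limitation that the paper leaves implicit.
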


\n 
\begin{proof}
We prove the result for only one set of defects repaired early. According to Proposition~\ref{prop1}, repairs with same deadline are handled at the same time. Let $\mathscr{P}_{1}$ be a repair schedule such that $n_{k}$ repairs are done early at time $\tau$ such that $\tau < T_k$.  Assume furthermore that $\tau \notin \mathscr{D}_{k-1}$. The total cost of $\mathscr{P}_{1}$ is

\begin{equation*}
 C(t, \mathscr{P}_{1}) = C_{insp}(t)+\sum_{i=1,i\neq k}^{N_{t}} n_i C_{rep}(T_i)+n_kC_{rep}(\tau) +\sum_{i=1, i\neq k}^{N_{t}}C_{out}(T_i)+ C_{out}(\tau).
\end{equation*}

\n
Thus,
\begin{equation*}
  C(t, \mathscr{P}) -  C(t, \mathscr{P}_1) = n_k[C_{rep}(T_k)-C_{rep}(\tau)] - C_{out}(\tau)<0.
\end{equation*}
This result can then be applied when considering several blocks, in order to conclude the proof.
\end{proof}  

\begin{conclusion}
\label{cl2}
Proposition~\ref{prop2} allows for all repair schedules that include repairs
done before their deadline to be rejected.
\end{conclusion}

\begin{remark}
At this stage, we have moved from a very large space of schedules to a space of $(N_{t}+1)!$ possible schedules  for a given inspection time $t$.  Indeed, using Conclusion~\ref{cl1} and Conclusion~\ref{cl2}, we can repair the set $n_{1}$ either at time $T_{1}$ or earlier at time $T_{0}$ (two choices); the set $n_{2}$ can be repaired either at time $T_{2}$, or either at time $T_{1}$ or either at time $T_{0}$. For the last set $n_{N_t}$, we have $N_{t}+1$ possibilities. Thus, we obtain $(N_{t}+1)!$ plans to choose between if the next inspection is planned at time $t$.
\end{remark}

The next proposition states that the space of repair schedules may be reduced to $2^{N_{t}}$ for a given inspection time $t$. Denote by $\mathscr{P}_{t}^{j}$ the repair program such that the $N_{t}$ repairs (the last set of defects within $\Delta t$) has been done early at time $T_{j}$, $j=0,\ldots ,N_{t}-1$,
\begin{equation}  
\label{P*_j}
\mathscr{P}_{t}^{j} = \left\lbrace (n_0, T_0), (n_1, T_1), \ldots,
(n_j+n_{N_{t}}, T_j),(n_{j+1}, T_{j+1}), \ldots,(n_{N_{t} -1}, T_{N_{t} -1})
\right\rbrace, \quad n_0=0.
\end{equation}

\n 
For $j=0,\ldots ,N_{t}-1$, the associated total cost is given by
\begin{equation*}
C_{t}^{j} = C_{insp}(t)+n_{1}C_{rep}(T_{1})+\ldots
+(n_{j}+n_{N_{t}})C_{rep}(T_{j})+\ldots +n_{N_{t}-1}C_{rep}(T_{N_{t}-1})+\sum_{i=1}^{N_{t}-1}C_{out}(T_{i}).
\end{equation*}

\begin{proposition}
\label{prop3} 
The number of feasible solutions is $2^{N_{t}}$ for a
given inspection time $t \in \{ 1, \ldots ,T^* \}$.
\end{proposition}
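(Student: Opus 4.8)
The plan is to start from the reduced search space described in the remark preceding the statement and cut it down from $(N_{t}+1)!$ to $2^{N_{t}}$ by discarding dominated schedules. By Conclusions~\ref{cl1} and~\ref{cl2}, every schedule still worth considering assigns each set $n_{i}$ (for $i=1,\ldots,N_{t}$) to a single repair epoch $T_{\sigma(i)}\in\mathscr{D}$ with $\sigma(i)\le i$, and its cost splits as $C_{insp}(t)+\sum_{i=1}^{N_{t}}n_{i}C_{rep}(T_{\sigma(i)})+\sum_{\tau}C_{out}(\tau)$, where the last sum runs over the \emph{distinct} epochs among $T_{1},\ldots,T_{N_{t}}$ that are actually used (an epoch $T_{j}$ contributes a single out-of-service cost no matter how many sets are grouped there, and grouping at $T_{0}$ costs nothing, exactly as in the expression for $C_{t}^{j}$). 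So I would first record these two structural facts: the repair cost is additive over sets and decreasing in the repair epoch, while the out-of-service cost is counted once per used epoch.

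Next I would introduce a canonical form. For any admissible assignment $\sigma$, let $S=\{\,i\in\{1,\ldots,N_{t}\}:i\in\sigma(\{1,\ldots,N_{t}\})\,\}$ be the set of \emph{active} deadlines, i.e.\ those $T_{i}$ at which at least one set is repaired, and define the canonical assignment $\sigma_{S}$ that sends each set $n_{i}$ to the latest active epoch not exceeding $T_{i}$ (falling back on $T_{0}$ when no active deadline precedes $T_{i}$). The key claim is that $\sigma_{S}$ costs no more than $\sigma$: the two assignments use exactly the same active epochs, hence incur identical out-of-service costs, while $\sigma_{S}(i)\ge\sigma(i)$ for every $i$ (each set is pushed as late as the active epochs allow), so by the monotonicity of $C_{rep}(\cdot)$ the repair term does not increase. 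This is precisely the mechanism already exhibited by the family $\mathscr{P}_{t}^{j}$, applied simultaneously to every set rather than only to the last one. Consequently every candidate schedule is dominated by a canonical one, and it suffices to count the canonical schedules.

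Finally I would count by exhibiting a bijection between subsets $S\subseteq\{1,\ldots,N_{t}\}$ and canonical schedules $\sigma_{S}$. The map is injective because $S$ is recoverable from $\sigma_{S}$: by construction the set $n_{i}$ is repaired exactly at its own deadline $T_{i}$ if and only if $i\in S$, so the active epochs of $\sigma_{S}$ are precisely $\{T_{i}:i\in S\}$. Letting $S$ range over all subsets of an $N_{t}$-element set then yields $2^{N_{t}}$ pairwise distinct schedules, which is the announced bound. Equivalently one may phrase this as a one-step recursion: writing $f(N_{t})$ for the number of canonical schedules, the last set $n_{N_{t}}$ is either repaired at $T_{N_{t}}$ (so $T_{N_{t}}$ is active) or merged earlier as in $\mathscr{P}_{t}^{j}$ (so $T_{N_{t}}$ is inactive and $n_{N_{t}}$ follows the latest active epoch of the sub-schedule on $n_{1},\ldots,n_{N_{t}-1}$); each alternative reproduces a full canonical sub-schedule on the first $N_{t}-1$ sets, giving $f(N_{t})=2f(N_{t}-1)$ with $f(1)=2$, hence $f(N_{t})=2^{N_{t}}$.

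The step I expect to be delicate is the domination claim: one must check carefully that canonicalization preserves the set of active epochs—so that the out-of-service cost is genuinely unchanged rather than merely not increased—and in particular that every $T_{i}$ with $i\in S$ is still occupied under $\sigma_{S}$ (it is, by the set $n_{i}$ itself). One must also handle the fallback to $T_{0}$, whose zero out-of-service cost is what lets the first block behave like all the others and produces the clean power of two rather than a count of compositions.
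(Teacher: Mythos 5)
Your argument is correct, but it takes a genuinely different route from the paper's. The paper proves Proposition~\ref{prop3} by induction on $N_t$: it fixes the placement of the sets $n_1,\ldots,n_k$, compares the partial costs $C_t^{0},\ldots,C_t^{k-1}$ to rule out every placement of $n_{k+1}$ earlier than the latest occupied deadline, and observes that the remaining comparison (own deadline $T_{k+1}$ versus the latest occupied epoch $T_k$) has indeterminate sign, leaving exactly two choices per set; its inductive step, however, is only carried out for one configuration of the earlier sets. You instead give a global exchange argument: you index candidates by the set $S$ of active deadlines, dominate an arbitrary admissible assignment $\sigma$ by the canonical $\sigma_S$ that pushes each block to the latest active epoch not exceeding its deadline (identical out-of-service cost because the active epochs are preserved and $T_0$ contributes none, and no larger repair cost because $\sigma_S(i)\ge\sigma(i)$ and $C_{rep}(\cdot)$ is decreasing), and then count canonical schedules via the bijection $S\mapsto\sigma_S$. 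This buys uniformity --- every configuration is handled at once, which is precisely the gap the paper leaves open with ``we will consider only one configuration'' --- and your closing recursion $f(N_t)=2f(N_t-1)$ recovers the paper's inductive structure as a corollary. One caveat, shared with the paper: both arguments rigorously establish that at most $2^{N_t}$ pairwise distinct schedules need be retained; that none of them can be discarded uniformly over admissible cost functions (so that the count is exactly $2^{N_t}$) rests, in both treatments, only on the observation that the residual cost comparisons have indeterminate sign.
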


\begin{proof}
We shall prove this proposition by induction on $N_{t}$. For $N_{t} = 1$, i.e., the inspection is planned so that there is only one set $n_1$ (with deadline $T_1$) within $\Delta t$. In this case, the proposition is proved. We shall prove the result for $N_{t} = 2$ and suppose that the $n_1$ repairs are planned. We shall prove that there are two repair scenarios for the set of $n_2$ defects.  We consider two cases. In the first, the $n_1$ repairs take place at time $T_1$; in the second, they are moved to time $T_0=0$.\\

\n
\textit{First case.} We have $\mathscr{P}_{t} = \{ (n_1, T_1), (n_2, T_2)\}$ with total cost $C(t, \mathscr{P}_{t})$ given by
\begin{equation*}
C(t, \mathscr{P}_{t}) = C_{insp}(t)+ \sum_{i=1}^{2}n_i C_{rep}(T_i)+ \sum_{i=1}^{2} C_{out}(T_i).
\end{equation*}

\n
The total cost associated with $\mathscr{P}_{t}^{1} = \{ (n_1+n_2, T_1)\}$ is 
\begin{equation*}
C_{t}^{1} = C_{insp}(t)+ (n_1+n_2)C_{rep}(T_1)+  C_{out}(T_1),
\end{equation*}

\n
and the cost related to $\mathscr{P}_{t}^{0} = \{ (n_2, T_0),(n_1, T_1)\}$ is 
\begin{equation*}
C_{t}^{0} = C_{insp}(t)+ n_2 C_{rep}(T_0) + n_1 C_{rep}(T_1)+ C_{out}(T_1).
\end{equation*}

\n
When comparing $C_{t}^{1}$ with $C_{t}^{0}$, we obtain
\begin{equation*}
C_{t}^{1} - C_{t}^{0} = n_2\left[ C_{rep}(T_1)-C_{rep}(T_0) \right] < 0,
\end{equation*}

\n
and thus can rule out the repair plan $\mathscr{P}_{t}^{0}$. When comparing $C(t, \mathscr{P}_{t})$ with $C_{t}^{1}$, we have
\begin{equation*}
C(t, \mathscr{P}_{t}) - C_{t}^{1} = n_2\underbrace{\left[ C_{rep}(T_2)-C_{rep}(T_1) \right]}_{ < 0} + C_{out}(T_2).
\end{equation*}

\n
In this case, we cannot conclude which is the best program in terms of minimal cost. Consequently, for a given inspection time $t$, we may repair the set $n_2$ either at its deadline $T_2$ ($\mathscr{P}_t$) or early at time $T_1$ ($\mathscr{P}_t^1$). \\

\n 
\textit{Second case.} Set $\tilde{\mathscr{P}}_{t} =\{(n_1, T_0), (n_2, T_2) \}$ with total cost $C(t, \tilde{\mathscr{P}}_{t})$ given by
\begin{equation*}
C(t, \tilde{\mathscr{P}}_{t}) = C_{insp}(t)+ n_1 C_{rep}(T_0) + n_2 C_{rep}(T_2)+ C_{out}(T_2).
\end{equation*}

\n
The cost related to $\tilde{\mathscr{P}}_{t}^{1} = \{ (n_1, T_0),(n_2,T_1)\}$ is 
\begin{equation*}
C_{t}^{1} = C_{insp}(t)+ n_1 C_{rep}(T_0) + n_2 C_{rep}(T_1)+ C_{out}(T_1),
\end{equation*}

\n
and the total cost associated with $\tilde{\mathscr{P}}_{t}^{0}= \{ (n_1+n_2, T_0)\}$ is 
\begin{equation*}
C_{t}^{0} = C_{insp}(t)+ (n_1+n_2) C_{rep}(T_0).
\end{equation*}

\n
Since $C(t, \tilde{\mathscr{P}}_{t}) - C_{t}^{1} = n_2\left[ C_{rep}(T_2) - C_{rep}(T_1)\right] + C_{out}(T_2) - C_{out}(T_1) < 0$, we can rule out the schedule $\tilde{\mathscr{P}}_{t}^{1}$. Furthermore, when comparing $C(t, \tilde{\mathscr{P}}_{t})$ with $C_{t}^{0}$, we have
\begin{equation*}
C(t, \tilde{\mathscr{P}}_{t}) - C_{t}^{0} = n_2\left[ C_{rep}(T_2)-C_{rep}(T_0) \right] + C_{out}(T_2),
\end{equation*}
which does not allow us to conclude anything, so the proposition is proved for $N_{t}=2$.

Now, let $t$ be an inspection time such that $N_{t} = k+1$. Assuming that the set $n_1, \ldots , n_k$ are planned, we will consider only one configuration. We suppose that the sets of $n_1, \ldots, n_k$ defects take place respectively at times $T_1, \ldots , T_k$. We shall prove that for a given inspection time $t$, there are two choices for positioning the $(k+1)$th repairs. Using the fact that $ C_{rep}(T_k)< C_{rep}(T_{k-1}) < \ldots < C_{rep}^{0}$, and the fact that repairing early when other repairs are planned does not add an out of service cost, yields
 \begin{equation*}
  C_{t}^{k} < C_{t}^{k-1}< \ldots < C_{t}^{0}.
\end{equation*}

\n
Thus, we can rule all repair plans such that the $n_{k+1}$ repairs are done at an  earlier date than $T_k$. There remain only two repair schedules to compare, $\mathscr{P}_{t}$ and $\mathscr{P}_{t}^{k}$,
 with respective total costs $C(t,\mathscr{P}_{t})$ and $C_{t}^{k}$.
 \begin{equation*}
  C(t,\mathscr{P}_{t}) - C_{t}^{k} = n_k\underbrace{\left[ C_{rep}(T_{k+1}) - C_{rep}(T_k)\right]}_{< 0}  + C_{out}(T_{k+1}).
\end{equation*}
We cannot conclude which is the best. Consequently, we may either plan the $n_{k+1}$ repairs at year $T_k$ (schedule $\mathscr{P}_{t}^{k}$) or   at year $T_{k+1}$ (schedule $\mathscr{P}_{t}$).  
\end{proof}

\begin{conclusion}
\label{cl3}
Proposition~\ref{prop3} states that for a given inspection time $t$, the number of schedules to consider is $2^{N_{t}}$. For a given set of defects $n_k$, $k \in \{ 1, \ldots , N_t \}$, there are two repair strategies. Either the set $n_{k}$ is handled at its deadline $T_k$, or earlier at the previous deadline, where repairs are already planned.
\end{conclusion}

The next proposition plays a crucial role in designing the algorithm which will solve the optimization problem \eqref{Opt Problem} in the forthcoming section. Let $\tilde{\mathscr{P}}$ be any repair schedule achieved from $\mathscr{P}$. 

\begin{proposition}
\label{prop4} 
For any repair schedule $\tilde{\mathscr{P}}$, the function $s \in (0,T^*) \mapsto C_{tot}(s,\tilde{\mathscr{P}})$ is not convex and has local minima at points $ s =\{ T_{j}-1, \; j=1, \ldots , N  \} \cup \{ T^* \}$. The optimal inspection $t^*$ is found at one of these.
\end{proposition}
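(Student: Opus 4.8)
The plan is to make the shape of $s\mapsto C_{tot}(s,\tilde{\mathscr{P}})$ completely explicit: I will write it as a continuous strictly decreasing term plus a non-decreasing step term, and then simply read off the local minima from that shape. Set $\rho=(1+r_i)/(1+r_d)$; the standing hypothesis $r_i<r_d$ gives $0<\rho<1$, so that $C_{insp}(s)=C_{inp}^{0}\rho^{s}$ is continuous and \emph{strictly decreasing} on $(0,T^*)$, and likewise $C_{rep}$ and $C_{out}$ are strictly decreasing. This monotonicity is the only analytic input I need.

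Next I would isolate the $s$-dependence of the repair and out-of-service costs. Fix any feasible schedule $\tilde{\mathscr{P}}$; by Conclusions~\ref{cl1}--\ref{cl3} its repairs are performed at epochs drawn from the deadline set $\{T_1,\dots,T_N\}$ (possibly together with $T_0=0$). Writing $C_{tot}(s,\tilde{\mathscr{P}})=C_{insp}(s)+R(s)$, the term $R(s)$ collects the repair and out-of-service costs of exactly those repair epochs that fall inside $(0,s]$. Hence $R$ changes with $s$ only through which epochs have been passed: it is constant between consecutive repair epochs and, each time $s$ crosses one such epoch $\tau$, it increases by the strictly positive amount $m\,C_{rep}(\tau)+C_{out}(\tau)$, where $m\ge 1$ is the number of defects repaired at $\tau$ (strictly positive because the defect counts and all costs are positive). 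Thus $R$ is a non-decreasing step function whose jumps sit at a subset of $\{T_1,\dots,T_N\}$.

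Combining the two pieces gives the whole picture. On each maximal interval on which $R$ is constant, $C_{tot}(\cdot,\tilde{\mathscr{P}})$ equals $C_{insp}(\cdot)$ plus a constant and is therefore strictly decreasing, while at each repair epoch it makes an upward jump. Non-convexity is then immediate: a convex function on an open real interval is continuous, whereas $C_{tot}$ has genuine jump discontinuities in the interior of $(0,T^*)$. The same sawtooth shape locates the minima: on each decreasing piece the value keeps dropping until the next upward jump, so over the admissible integer grid $\{1,\dots,T^*\}$ the only candidate minimizers are the largest integers lying strictly below each jump epoch $T_j$, namely $T_j-1$, together with the right endpoint $T^*$ of the final jump-free stretch $[\text{last epoch},T^*]$ on which $C_{tot}$ is still strictly decreasing. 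Since this holds for the schedule's actual jump epochs, which form a subset of $\{T_1,\dots,T_N\}$, every local minimum lies in $\{T_j-1:\,j=1,\dots,N\}\cup\{T^*\}$; the global minimizer $t^*$, being in particular a grid minimizer, is therefore one of these points.

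I expect the only delicate step to be the bookkeeping of the jump convention. Because $N_s=\operatorname{card}\{i:T_i\le s\}$ counts $T_j$ already at $s=T_j$, the upward jump occurs \emph{at} $s=T_j$, not just after it; it is precisely this left-closed convention, read against the integer inspection grid, that forces the local minimum onto $T_j-1$ rather than $T_j$. Keeping the continuous description on $(0,T^*)$ and the discrete feasible set $\{1,\dots,T^*\}$ cleanly separated is where the argument must be stated with care.
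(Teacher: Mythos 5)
Your proof is correct and follows essentially the same route as the paper's: decompose $C_{tot}(\cdot,\tilde{\mathscr{P}})$ into the strictly decreasing inspection cost plus a non-decreasing step function of repair and out-of-service costs, deduce non-convexity from the jump discontinuities, and read off the candidate minimizers $\{T_j-1\}\cup\{T^*\}$ from the sawtooth shape. If anything you are more careful than the paper about the jump convention: since $N_s=\operatorname{card}\{i:T_i\le s\}$, the upward jump sits at $T_j$ (not at $T_j-1$ as the paper loosely states), and your explicit separation of the continuous description from the integer inspection grid is what correctly forces the minimizers onto $T_j-1$.
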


\begin{proof}
The total cost function $C_{tot}(., \tilde{\mathscr{P}})$ is the sum  of a decreasing function $C_{insp}(.)$ and a step function $(C_{rep} + C_{out})(.)$ with jump discontinuities at $T_j -1$ for $ j=1, \ldots , N$. Then, the total cost function $C_{tot}(., \mathscr{P})$ is increasing within $[T_j -1, T_j]$ and decreasing within $[T_{j}, T_{j+1} -1]$ and $[T_{N_t}, T^*]$. Thus $s \mapsto C_{tot}(s, \mathscr{P})$ cannot be convex and has local minima  at $T_j -1$, $j=1, \ldots , N$. Therefore, the global minimum is found at one of these dates. 
\end{proof} 

\begin{remark}
Note that if $T_{1} = 1$, $T_1 -1$ should not be considered as a candidate for the next inspection because it would  coincide with the primary inspection.
\end{remark}

\begin{conclusion}
\label{cl4}
Proposition~\ref{prop4} suggests that the optimal solution for the PM schedule problem should be chosen for an inspection time among $\{ T_j-1, j=1, \ldots , N \} \cup \{  T^* \}$ instead of $\{1, \ldots , T^*\}$. This implies that the space of feasible solutions may be reduced to $\sum_{k=1}^{N} 2^{k}$ repair plans to consider.
\end{conclusion}

\n 
This proposition pushes us to solve the PM scheduling problem in a support decision framework, by building an algorithm that proposes a set of repair schedules related to inspection times within $t \in \{T_{j}-1,j=1,\ldots ,N\}\cup \{ T^*\}$, which includes the optimal PM schedule $(t^{*},\mathscr{P}_{t}^{*})$, in order to support decisions of pipeline managers.

\subsection{Formulating PM as an integer programming problem}
\label{sub-minlp formulation}

The problem described above can be formulated as a binary integer nonlinear programming model, taking into account all the previous propositions, and certain constraints that we shall now
describe. \vspace{0.3cm}

We are looking for an optimal inspection time $t^{\ast}$ such that $t^{\ast}\in \{T_1 - 1 ,\ldots ,T_{N}-1, T^{\ast}\}$, and a repair schedule within $\Delta t^{\ast}$ that minimizes the total cost. In the following, we introduce two decision variables $a_i$ and $b_j$, for $i=1, \ldots ,N+1$, and $j=0, \ldots ,
N$, defined by:
\begin{equation*}
a_i = \left\lbrace 
\begin{array}{lll}
1 \;\; \mbox{if an inspection is planned in year}\;\; T_i -1 &  &  \\ 
0 \;\; \mbox{otherwise,} &  & 
\end{array}
\right.
\end{equation*}

\begin{equation*}
a_{N+1} = \left\lbrace 
\begin{array}{lll}
1 \;\; \mbox{if an inspection is planned in year}\;\; T^* &  &  \\ 
0 \;\; \mbox{otherwise,} &  & 
\end{array}
\right.
\end{equation*}

\n 
and
\begin{equation*}
b_j = \left\lbrace 
\begin{array}{lll}
1 \;\; \mbox{if repairs are planned in year}\;\; T_j &  &  \\ 
0 \;\; \mbox{otherwise,} &  & 
\end{array}
\right.
\end{equation*}

\n 
where $T_0=0$. Denote by $a$ and $b$ the variables 
\begin{equation*}
a = (a_1, \ldots, a_{N+1}) \quad \mbox{and} \quad b = (b_0, \ldots, b_N).
\end{equation*}

\n
Both vectors must satisfy certain constraints, which we now describe. The first ensures that on the time interval $(0,T^{\ast }]$, there is only one inspection after the primary inspection. Note that if $T_1 = 1$, we cannot plan the next inspection at time $0$, that is during the primary inspection; thus, define the variable $\alpha$ such that
\begin{equation*}
\alpha = \left\lbrace 
\begin{array}{lll}
1 \;\; \mbox{if} \;\; T_1 \neq 1   \\ 
0 \;\; \mbox{otherwise.}  
\end{array}
\right.
\end{equation*}

\n
Therefore, the constraint stating that there is only one inspection on $(0,T^*]$ is reflected as: 
\begin{equation*}
\alpha a_1 + \sum_{j=2}^{N+1} a_j = 1.
\end{equation*}

\n 
For example, if $T^{\ast }=10$, $N=4$ and $T_1 \neq 1$, the variable $a=(0,0,0,0,1)$ of length $5$
means that an inspection is planned at time $T^{\ast}$. However, if $T_1 = 1$, the vector $a$ has length $4$, and $a=(0,1,0,0)$ means that the inspection is planned at year $T_3 - 1$. \\

\n 
A second constraint encodes the fact that we cannot plan repairs simultaneously at times $T_0$ and $T_1$. Thus,
\begin{equation*}
b_0 + b_1 = 1.
\end{equation*}

\n 
Indeed, Proposition~\ref{prop3} states that there are two options when planning repairs. Suppose that we decide to repair the $n_{1}$ defects at time $T_{1}$ (at their deadline). Then, the $n_{2}$ defects may be repaired either at time $T_{2}$ (their deadline) or time $T_{1}$, but not at time $T_{0}=0$. Suppose now that the $n_{1}$ repairs were moved at time $T_{0}$, then the ${2}$  repairs may be planned at time $T_{2}$ or at time $T_{0}$, but not at time $T_{1}$.
Since defects with the same deadline have to be repaired at the same time, we have at most $N$ sets of repairs:
\begin{equation*}
1 \leq \sum_{j=0}^{N} b_j \leq N.
\end{equation*}

\n
The PM scheduling problem is then the following:
\begin{equation}  
\label{Opt Problem}
\begin{array}{ll}
\mbox{Minimize} \quad \;\; C_{tot}(a_1,\ldots, a_{N+1}, b_0,\ldots , b_N)%
\vspace{0.3 cm} &  \\ 
\mbox{subject to} \left\{ 
\begin{array}{llll}
a_i \in \{0,1\} \; \mbox{for all}\; i= 1,\ldots, N+1 \vspace{0.2 cm} &  &  & 
\\ 
b_j \in \{0,1\} \; \mbox{for all}\; j= 0,\ldots, N \vspace{0.2 cm} &  &  & 
\\ 
\alpha a_1 + \sum_{i=2}^{N+1} a_i = 1, \;\; \alpha \in \{ 0,1 \} \vspace{0.2 cm}  \\ 
b_0 + b_1 = 1 \vspace{0.2 cm} &  &  &  \\ 
1 \leq \sum_{j= 0}^{N} b_j \leq N, &  &  & 
\end{array}
\right. & 
\end{array}%
\end{equation}

\n 
where the objective function is given by
\begin{align}  
\label{objective function}
\begin{split}
C(&a_1,\ldots, a_{N+1}, b_0,\ldots , b_N) =  \alpha a_1C_{insp}(T_1-1) + \sum_{i=2}^{N}a_iC_{insp}(T_i -1) + a_{N+1}C_{insp}(T^*) \\
& + \sum_{j=0}^{N-1}b_j(\Pi_{i=1}^{j}(1-a_i))(n_j + (1-b_{j+1})(n_{j+1} +(1-b_{j+2})( n_{j+2} + \ldots +(n_N(1-b_N))\ldots)) C_{rep}(T_j) \\
&\qquad + b_N \times (\Pi_{i=1}^{N}(1-a_i))\times n_NC_{rep}(T_N)  + \sum_{j=1}^N b_j (\Pi_{i=1}^{j}(1-a_i))C_{out}(T_j),
\end{split}%
\end{align}

\n 
with the conventions $n_0 = 0$ and $\Pi_{i=1}^{T_0}(1-a_i) = 1$. We illustrate this objective function with the following example.

\begin{example}
We fix the time horizon $T^{\ast }$ as $T^{\ast }=10$ and the number of deadlines $N$ as $N=3$. Set $T_{1}=2$, $T_{2}=4$ and $T_{3}=8$, and for all  $i \in \{{1,2,3}\}$, we set $n_{i}=1$. Thus, the primary PM \eqref{primary PM} is given by 
\begin{equation*}
\mathscr{P}=\{(1,T_1),(1,T_2),(1,T_3)\}.
\end{equation*}

\n 
Suppose that we want an inspection to take place at year $t=T_3 - 1 = 7$. Since $T_1 \neq 1$ then $\alpha =1$, and thus
\begin{equation*}
a=(0,0,1,0).
\end{equation*}

\n
Suppose furthermore that we want to do repairs with deadline $T_{2}=4$ early, at year $T_{1}=2$; then, the new PM schedule $\mathscr{P}_{t}^{1}$ \eqref{P*_j}, with $N_{t} = 2$, is $\mathscr{P}_{t}^{1}=\{(1+1,T_{1})\}$; thus the vector $b$ is defined by 
\begin{equation*}
b=(0,1,0,0).
\end{equation*}

\n 
To perform the total cost of this program, $a$ and $b$ are substituted into \eqref{objective function}, with 
\begin{equation*}
a=(0,0,1,0) \qquad \mbox{and}\qquad b=(0,1,0,1).
\end{equation*}

\n 
The calculation of the inspection cost for $\mathscr{P}_{t}^{1}$ gives
\begin{align*}
C_{insp}(a) &= a_1C_{insp}(T_1-1) + a_2C_{insp}(T_2-1)+a_3C_{insp}(T_3-1)+a_4C_{insp}(T^*)\\
            &= a_3C_{insp}(T_3-1) = 1 \times C_{insp}(7). 
\end{align*}

\n
The calculation of the reparation cost is somewhat complicated because it has to take into account whether or not the deadlines are within the inspection interval $\Delta t = (0,T_3-1]$, and the total number of repairs for each deadline. Noting that $T_0$, $T_1$ and $T_2$ are within $\Delta t$, we  multiply the variables $b_0$, $b_1$ and $b_2$ respectively by $\Pi_{i=1}^{0}(1-a_i) = 1$, $(1-a_1)=1$ and $(1-a_1)(1-a_2)=1$. Since $T_3 \notin \Delta t$, the variable $b_3$ will be multiplied by $(1-a_1)(1-a_2)(1-a_3) = 0$, and does not contribute to the total cost, even if $b_3 = 1$. The total number of repairs for each deadline is expressed by a non-linear term that involves $n_j$ and $(1-b_j)$. Thus, the calculation of the repair cost yields
\begin{align*}
C_{rep}(a,b) &= b_1\left((1-a_1)\right)(n_1 + (1-b_{2})(n_{2} + (n_3(1-b_3)))C_{rep}(T_1) \\
             &= 1 \times \left((1-0)\right) (n_1 + (1-0)(n_2 + n_3(1-1)))C_{rep}(T_1) \\
             &= (n_1+n_2)C_{rep}(T_1).
\end{align*}

\n
The out-of-service cost is given by
\begin{align*}
C_{out}(a,b) &= b_1 \times (1-a_1) C_{out}(T_1)=1 \times C_{out}(T_1).
\end{align*}

\n 
We obtain
\begin{equation*}
C(a, b) = C_{insp}(6) + 2C_{rep}(T_1) + C_{out}(T_1),
\end{equation*}

\n 
which is the total cost of $\mathscr{P}_{t}^{1}$.
\end{example}

\begin{remark}
Proposition~\ref{prop4} states that the optimal solution of \eqref{Opt Problem} is achieved for an inspection time belonging to $\{ T_1 -1, \ldots , T_N -1, T^* \}$. Then, the size of the space of feasible solutions may be reduced to $\sum_{k=0}^{N} 2^k - 1$. Indeed, if $t = T_1 - 1$, then $N_t = 0$ and there is no repair plan to explore. If $t = T_2-1$, then $N_t=1$ and there are $2^1$ repair plans to consider. If $t=T_3-1$, $2^2$ repair schedules have to be considered, which yields the result.
\end{remark}

\section{Solution-finding strategy}
\label{section3:algorithm}

The PM scheduling problem consists of finding an optimal inspection time $t^{\ast}$ such that $t^{\ast}\in\{T_1-1,\ldots , T_{N}-1, T^*\}$, and a repair schedule within $\Delta t^{\ast}$ that minimizes the total cost. 
The search for the optimal solution requires considering $(1-2^{N+1})/(1-2) -1$ possible repair plans. We will see in this section how to reduce this number to $(N+1)(N+2)/2-1$. We outline an algorithm in the context of the support decision framework, which computes the optimal PM schedule, together with alternative solutions, in order to propose to the pipeline manager a set of ''good'' solutions. A naive way to solve the problem is to look through, for a fixed inspection time $t \in \{ T_j-1, j=1, \ldots ,N , T^* \}$, all of the $2^{N_{t}}$ feasible solutions, and save any of those with the best objective function. For large values of $N$, such a method is not efficient. To quantify the efficiency of our algorithm, we have developed two other algorithms related to the problem \eqref{Opt Problem}). The first investigates all repair schedules when $t \in \{ T_j-1, j=1, \ldots ,N , T^* \}$, and the second, which will serve as a benchmark, looks through all repair plans when $t \in \{ 1, 2, \ldots, T^* \}$. This will be developed in Section~\ref{section4:simulations}.

\subsection{Construction of the algorithm}
\label{sub sec algo}

The aim of this section is to develop a strategy which builds a search tree and returns the optimal schedule for a given inspection time. In order to design our algorithm, we shall use the following example. Over a time horizon $T^{\ast }=20$, we consider $N=3$ sets of defects with cardinality $n_{1},n_{2},n_{3}$, respectively, with associated deadlines $T_{1}$, $T_{2}$ and $T_{3}$, and we will fix the inspection time as $t = T^*$. Note that this search tree will not  necessarily find the global optimum. In order to find that, we have to build all  $N+1$ search trees, each  related to inspection times, and compare the associated total costs.

\subsubsection{First and second generations}

The inspection time is $t = T^*$, thus the number of set of defects that will be considered is $N_{t} = 3$. We need to look through $2^3=8$ repair schedules. We shall represent the various repair plans by a tree, where each branch represents one strategy. For example, the branch $br(2)$ corresponds to  the repair plan $\{(n_1,T_1),(n_2+n_3,T_2) \}$ (see Figure~\ref{figure1}). 

The strategy is to look through all $8$ repair plans in order to discard the most expensive branches.
After removing some branches, a final tree will remain, corresponding to potential solutions; the least expensive will be the optimal for fixed inspection times (but not necessarily the global optimum). Note that since $N_{t} = 3$, the final tree will have three generations. This methodology will allow us to draw rules for an iterative construction of the final search tree. Figure~\ref{figure8} depicts such a final tree.

\begin{figure}[h!]
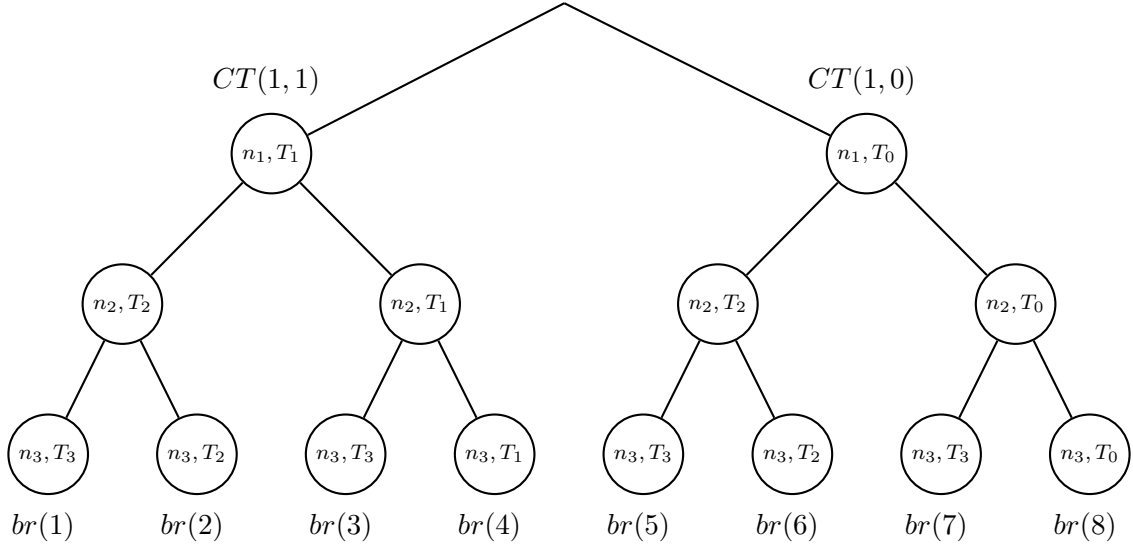

\pstree{\Tr{}}
{ \pstree{\Tcircle{\scriptsize $n_1,T_1$}~[tnpos=a]{$CT(1,1)$}}{\pstree{\Tcircle{\scriptsize $n_2,T_2$}} {\Tcircle{\scriptsize $n_3,T_3$}~{$br(1)$} \Tcircle{\scriptsize $n_3,T_2$}~{$br(2)$}} \pstree{\Tcircle{\scriptsize $n_2,T_1$}}{\Tcircle{\scriptsize $n_3,T_3$}~{$br(3)$} \Tcircle{\scriptsize $n_3,T_1$}~{$br(4)$}}}

\pstree{\Tcircle{\scriptsize $n_1,T_0$}~[tnpos=a]{$CT(1,0)$}}{\pstree{\Tcircle{\scriptsize $n_2,T_2$}} {\Tcircle{\scriptsize $n_3,T_3$}~{$br(5)$} \Tcircle{\scriptsize $n_3,T_2$}~{$br(6)$}} \pstree{\Tcircle{\scriptsize $n_2,T_0$}}{\Tcircle{\scriptsize $n_3,T_3$}~{$br(7)$} \Tcircle{\scriptsize $n_3,T_0$}~{$br(8)$}}}
}
\caption{The search tree representing the $2^{N_{t}}$ PM schedules when $t = T^*$.}
\label{figure1}
\end{figure}

We are first interested in building and comparing the branches $br(1)$, $br(2)$, $br(5)$, $br(6)$. The nodes $(n_{1},T_{0})$ and $(n_{1},T_{1})$ are labeled with the total costs, denoted by $CT(1,0)$ and $CT(1,1)$ respectively, corresponding to the cost related to the \textit{partial plans} $\{(n_{1},T_{0})\}$ and $\{(n_{1},T_{1})\}$(see Figure~\ref{figure1}). The costs $CT(1,0)$ and $CT(1,1)$ are given by
\begin{equation}  \label{CT10}
CT(1,0) = C(t,\{(n_1,T_0)\}) = C_{insp}(t) + n_1C_{rep}(T_0),
\end{equation}

\n and 
\begin{equation}  
\label{CT11}
CT(1,1)=C(t, \{(n_1,T_1)\})= C_{insp}(t)+ n_1C_{rep}(T_1)+ C_{out}(T_1).
\end{equation}

\n The total costs of the branches $br(1)$, $br(2)$, $br(5)$, $br(6)$ are 
\begin{equation*}
\begin{array}{llll}
C(t,\,br(1))=CT(1,1) + n_2C_{rep}(T_2)+ C_{out}(T_2) + n_3C_{rep}(T_3)+
C_{out}(T_3), &  &  &  \\ 
C(t,\,br(5))=CT(1,0) + n_2C_{rep}(T_2)+ C_{out}(T_2) +
n_3C_{rep}(T_3)+C_{out}(T_3), &  &  &  \\ 
C(t,\,br(2))=CT(1,1) + n_2C_{rep}(T_2)+ C_{out}(T_2) + n_3C_{rep}(T_3), & 
&  &  \\ 
C(t,\,br(6))=CT(1,0) + n_2C_{rep}(T_2)+ C_{out}(T_2) + n_3C_{rep}(T_2). & 
&  &  \\ 
&  &  & 
\end{array}%
\end{equation*}

\n When comparing the repair schedules $br(1)$ with $br(5)$, and $br(2)$ with $br(6)$, we get
\begin{align*}
C(t,\,br(1)) - C(t,\,br(5)) &= CT(1,1) - CT(1,0) =  n_1 ( \underbrace{C_{rep}(T_1) - C_{rep}(T_0)}_{ < 0} ) + \underbrace{C_{out}(T_1)}_{> 0} ,
\end{align*}

\n and
\begin{align*}
C(t,\,br(2)) - C(t,\,br(6)) &= CT(1,1) - CT(1,0) =  n_1(C_{rep}(T_1) - C_{rep}(T_0)) + C_{out}(T_1).
\end{align*}

\n At this stage, we cannot conclude as to which is the most expensive program, so we must add a condition on
\begin{equation*}
CT(1,1) - CT(1,0).
\end{equation*}

$\bullet $ If $CT(1,1)<CT(1,0)$, then the node (or the partial repair
program) $\{(n_{1},T_{1})\}$ has the smallest cost function, and we get
\begin{equation*}
C(t,\,br(1)) < C(t,\,br(5)) \qquad \mbox{and} \qquad C(t,\,br(2)) < C(t,\,br(6)).
\end{equation*}

\n  The repair programs $br(5)$ and $br(6)$ are respectively more expensive than $br(1)$ and $br(2);$  the branches $br(5)$ and $br(6)$ are rejected, i.e., the child node $(n_{2},T_{2})$ of $(n_{1},T_{0})$ is discarded. 

$\bullet $ If $CT(1,0)<CT(1,1)$, then the node $(n_{1},T_{0})$ has the smallest cost function
\begin{equation*}
C(t,\,br(1)) > C(t,\,br(5)) \qquad \mbox{and} \qquad C(t,\,br(2)) > C(t,\,br(6)).
\end{equation*}

\n The branches $br(1)$ and $br(2)$ are respectively more expensive than $br(5)$ and $br(6);$ hence  $br(1)$ and $br(2)$ are rejected, i.e., the descendant of $(n_{1},T_{1})$ is discarded, \textit{i.e.}, $(n_{2},T_{2})$.

We have so far compared the branches $br(1)$, $br(2)$, $br(5)$, $br(6)$. It remains to compare $br(3)$ with $br(7)$ and $br(4)$ and $br(8)$. As before, we calculate the total cost of the repair programs $br(3)$, $br(7)$, $br(4)$
and $br(8)$:
\begin{equation*}
\begin{array}{lllll}
C(t,\,br(3)) = CT(1,1) + n_2C_{rep}(T_1) + n_3C_{rep}(T_3) + C_{out}(T_3),%
\vspace{0.1 cm} &  &  &  &  \\ 
C(t,\,br(7)) = CT(1,0) + n_2C_{rep}(T_0) + n_3C_{rep}(T_3) + C_{out}(T_3), 
\vspace{0.1 cm} &  &  &  &  \\ 
C(t,\,br(4)) = CT(1,1) + n_2C_{rep}(T_1) + n_3C_{rep}(T_1),\vspace{0.1 cm}
&  &  &  &  \\ 
C(t,\,br(8)) = CT(1,0) + n_2C_{rep}(T_0) + n_3C_{rep}(T_0). &  &  &  &  \\ 
&  &  &  & 
\end{array}
\end{equation*}

\n When comparing $br(3)$ with $br(7)$, and $br(4)$ with $br(8)$, we obtain
\begin{equation*}
\begin{array}{lll}
C(t^,\,br(3)) - C(t,\,br(7)) = CT(1,1) - CT(1,0) + n_2(\underbrace{
C_{rep}(T_1) - C_{rep}(T_0)}_{ < 0}).\vspace{0.1 cm} &  &  \\ 
C(t,\,br(4)) - C(t,\,br(8)) = CT(1,1) - CT(1,0) + (n_2+
n_3)(C_{rep}(T_1) - C_{rep}(T_0)). &  & 
\end{array}%
\end{equation*}

$\bullet$ If $CT(1,1) < CT(1,0)$, then
\begin{equation*}
C(t,\,br(3)) < C(t,\,br(7)) \qquad \mbox{and} \qquad C(t,\,br(4)) < C(t,\,br(8)).
\end{equation*}
We may discard $br(7)$ and $br(8)$, that is, the descendant of $(n_{1},T_{1})$, i.e., $(n_{2},T_{0})$. \vspace{0.2cm}

$\bullet$ If $CT(1,1) > CT(1,0)$, then
\begin{equation*}
\begin{array}{ll}
C(t,\,br(3)) - C(t,\,br(7)) = (\underbrace{CT(1,1) - CT(1,0)}_{ > 0 }) +
n_2(\underbrace{ C_{rep}(T_1) - C_{rep}(T_0)}_{ < 0}), \vspace{0.1 cm} &  \\ 
C(t,\,br(4)) - C(t,\,br(8)) = (CT(1,1) - CT(1,0)) + (n_2+
n_3)(C_{rep}(T_1) - C_{rep}(T_0)). & 
\end{array}%
\end{equation*}

\n Once again, we cannot conclude as to the most costly program, hence we keep the branches $br(3)$, $br(4)$, $br(7)$ and $br(8)$, and thus the nodes $(n_{2},T_{1})$ and $(n_{2},T_{0})$. 

By combining these observations, we can define a rule in order to build the second generation. We  begin with the search tree depicted in Figure~\ref{figure1}, representing the $2^{N_{t}} = 8$ PM plans. The first generation is composed of the nodes  $(n_{1},T_{1})$ and $(n_{1},T_{0})$, and  labeled respectively with the cost functions $CT(1,1)$ and $CT(1,0)$, defined by \eqref{CT11} and \eqref{CT10}. 

$\bullet $ When $\{(n_{1},T_{1})\}$ has a smaller cost function than $\{(n_{1},T_{0})\}$, \textit{i.e.}, when  $CT(1,1)<CT(1,0)$, we  discard the most expensive branches $br(5)$, $br(6)$, $br(7)$ and $br(8)$, and thus the node $(n_{1},T_{0})$. Only four branches remain, among which the optimal program (for the fixed inspection time), as shown in Figure~\ref{figure2}.

\begin{figure}[h!]
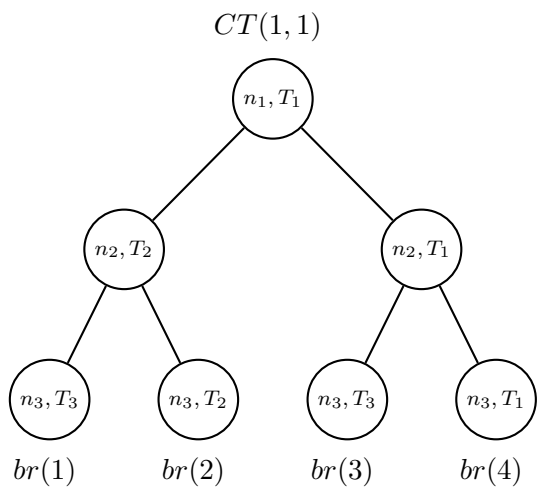

\begin{center}
\pstree{\Tcircle{\scriptsize $n_1,T_1$}~[tnpos=a]{$CT(1,1)$}}  {\pstree{\Tcircle{\scriptsize $n_2,T_2$}} { {\Tcircle{\scriptsize
$n_3,T_3$}~{$br(1)$}  \Tcircle{\scriptsize $n_3,T_2$}~{$br(2)$}}}  \pstree{\Tcircle{\scriptsize $n_2,T_1$}} {{\Tcircle{\scriptsize
$n_3,T_3$}~{$br(3)$} \Tcircle{\scriptsize $n_3,T_1$}~{$br(4)$}}}  }
\caption{Search tree when $CT(1,1) < CT(1,0)$.}
\label{figure2}
\end{center}
\end{figure}

$\bullet $ Under the condition $CT(1,1)>CT(1,0)$ (i.e., the node $(n_{1},T_{0})$ has a smaller cost function than $(n_{1},T_{1})$), the most expensive branches are $br(1)$ and $br(2)$. In this case, only six branches
remain ($br(3),\ldots br(8)$) instead of eight, among which is found the optimal program (see Figure~\ref{figure3}).

\begin{figure}[h!]
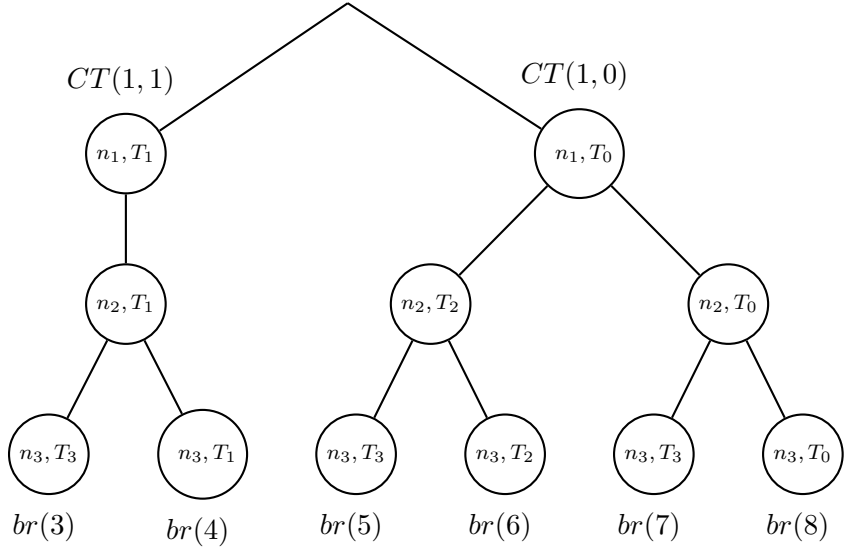

\begin{center}
\pstree{\Tr{}}
{ \pstree{\Tcircle{\scriptsize $n_1,T_1$}~[tnpos=a]{$CT(1,1)$}}
    { \pstree{\Tcircle{\scriptsize $n_2,T_1$}}{\Tcircle{\scriptsize $n_3,T_3$}~{$br(3)$} \Tcircle{    \scriptsize $n_3,T_1$}~{$br(4)$}}}

   \pstree{\Tcircle{ \scriptsize $n_1,T_0$}~[tnpos=a]{$CT(1,0)$}}{\pstree{\Tcircle{\scriptsize $n_2,T_2$}} {\Tcircle{\scriptsize $n_3,T_3$}~{$br(5)$} \Tcircle{\scriptsize $n_3,T_2$}~{$br(6)$}} \pstree{\Tcircle{\scriptsize $n_2,T_0$}}{\Tcircle{\scriptsize $n_3,T_3$}~{$br(7)$} \Tcircle{\scriptsize $n_3,T_0$}~{$br(8)$}}}
}
\caption{Search tree when $CT(1,1) > CT(1,0)$.}
\label{figure3}
\end{center} 
\end{figure}

Now, we are able to define the strategy for designing the second generation. Suppose
that we have built only the first generation, $\{(n_{1},T_{1}),(n_{1},T_{0})\}$, as depicted in Figure~\ref{figure4}. Nodes are labeled with cost function $CT(1,1)$ and $CT(1,0)$ respectively.

\begin{figure}[h!]
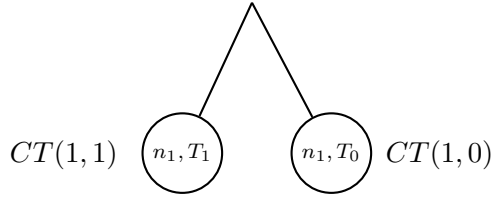

\begin{center}
\pstree{\Tr{}} { \Tcircle{\scriptsize $n_1,T_1$}~[tnpos=l]{$CT(1,1)$}
\Tcircle{\scriptsize $n_1,T_0$}~[tnpos=r]{$CT(1,0)$}}
\caption{Construction of the first generation.}
\label{figure4}
\end{center}
\end{figure}

$\bullet$ If $CT(1,1) = \min\{ CT(1,0), CT(1,1) \}$, we select $%
(n_1,T_1) $ and  create its two children: $(n_2, T_2)$ and $(n_2,T_1)$.
Since $C_{rep}(T_1) < C_{rep}(T_0)$, we do not create the child of $(n_1, T_0)$, i.e., $(n_2, T_0)$ (see
Figure~\ref{figure5}).

\begin{figure}[h!]
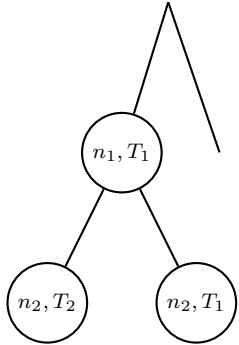

\begin{center}
\pstree{\Tr{}}  {\ \pstree{\Tcircle{\scriptsize
$n_1,T_1$}}{\Tcircle{\scriptsize $n_2,T_2$} \Tcircle{\scriptsize $n_2,T_1$}}
\Tr{}  }
\caption{Second generation when $CT(1,1) = \min\{CT(1,0), CT(1,1) \}$. }
\label{figure5}
\end{center}
\end{figure}

\n
Denote by $S(2)$ the set of all indices of deadlines at the second generation. Figure~\ref{figure5} gives $S(2) = \{ 1,2 \}$. Note that this set contains distinct deadlines. In the following, we will denote by $S(i)$ the set of indices of deadlines at the $i$th generation. 

$\bullet$ If $CT(1,0) = \min\{ CT(1,0), CT(1,1) \}$, we select the node $(n_1, T_0)$ to create its two children, $(n_2, T_2)$ and $(n_2, T_0)$. Since $%
C_{rep}(T_1) < C_{rep}(T_0)$, we generate the descendant of $(n_1,T_1)$, $(n_2, T_1)$ (see Figure~\ref{figure6}).
\begin{figure}[h!]
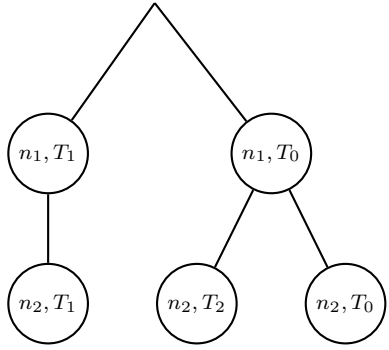

\begin{center}
\pstree{\Tr{}}
   { \pstree{\Tcircle{\scriptsize $n_1,T_1$}} {\Tcircle{\scriptsize $n_2,T_1$}}
   
     \pstree{\Tcircle{\scriptsize $n_1,T_0$}}{\Tcircle{\scriptsize $n_2,T_2$} \Tcircle{\scriptsize $n_2,T_0$}}
   } 
\end{center}
\begin{center}
\caption{ Second generation when $CT(1,0) = \min\{ CT(1,0), CT(1,1) \}$. }
\label{figure6}
\end{center} 
\end{figure}
In this case $S(2) = \{ 0,1,2 \}$. 

\subsubsection{Third generation}

In order to construct the third generation, we start from Figure~\ref{figure3}. As before, we will compare the total cost of the remaining repair programs, i.e., the branches $\{br(i),i=3,...,8\}$, in order to remove the most expensive ones. Let $CT(2,0)$, $CT(2,1)$ and $CT(2,2)$ be respectively the total cost of the partial programs $\{(n_{1}+ n_2,T_{0})\}$; $\{(n_{1}+ n_2,T_{1})\}$ and $\{(n_{1},T_{0}),(n_{2},T_{2})\}$. The nodes $(n_2,T_0)$, $(n_2,T_1)$ and $(n_2,T_2)$ are respectively labeled with $CT(2,0)$, $CT(2,1)$ and $CT(2,2)$ (see Figure~\ref{figure7}). These costs are given by
\begin{equation*}
\begin{array}{lll}
CT(2,0) = C(t,\{(n_1,T_0),(n_2,T_0)\}) = CT(1,0) + n_2C_{rep}(T_0),\vspace{%
0.1 cm} &  &  \\ 
CT(2,1) = C(t,\{(n_1,T_1),(n_2,T_1)\}) = CT(1,1) + n_2C_{rep}(T_1),\vspace{%
0.1 cm} &  &  \\ 
CT(2,2) = C(t,\{(n_1,T_0),(n_2,T_2)\}) = CT(1,0) + n_2C_{rep}(T_2) +
C_{out}(T_2). &  &  \\ 
&  & 
\end{array}%
\end{equation*}

\begin{figure}[h!]
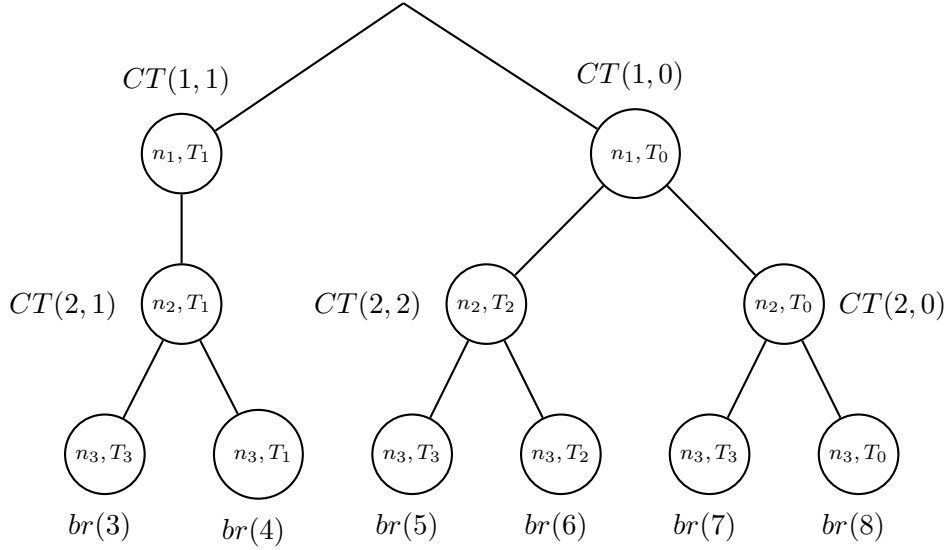

\begin{center}
\pstree{\Tr{}}
{ \pstree{\Tcircle{\scriptsize $n_1,T_1$}~[tnpos=a]{$CT(1,1)$}}
    { \pstree{\Tcircle{\scriptsize $n_2,T_1$}~[tnpos=l]{$CT(2,1)$}}{\Tcircle{\scriptsize $n_3,T_3$}~{$br(3)$} \Tcircle{    \scriptsize $n_3,T_1$}~{$br(4)$}}}

   \pstree{\Tcircle{ \scriptsize $n_1,T_0$}~[tnpos=a]{$CT(1,0)$}}{\pstree{\Tcircle{\scriptsize $n_2,T_2$}~[tnpos=l]{$CT(2,2)$}} {\Tcircle{\scriptsize $n_3,T_3$}~{$br(5)$} \Tcircle{\scriptsize $n_3,T_2$}~{$br(6)$}} \pstree{\Tcircle{\scriptsize $n_2,T_0$}~[tnpos=r]{$CT(2,0)$}}{\Tcircle{\scriptsize $n_3,T_3$}~{$br(7)$} \Tcircle{\scriptsize $n_3,T_0$}~{$br(8)$}}}
}
\end{center} 
\begin{center} 
\caption{Search tree when $CT(1,1) > CT(1,0)$ with labels $CT(2,0)$, $CT(2,1)$, $CT(2,2)$.}
\label{figure7}
\end{center}
\end{figure}

\n
The total costs of $\{br(i),i=3,...,8\}$ are respectively
\begin{equation*}
\begin{array}{lllll}
C(t,br(3)) = CT(2,1) + n_3C_{rep}(T_3) + C_{out}(T_3),\vspace{0.1 cm} &  & 
&  &  \\ 
C(t,br(5)) = CT(2,2) + n_3C_{rep}(T_3) + C_{out}(T_3),\vspace{0.1 cm} &  & 
&  &  \\ 
C(t,br(7)) = CT(2,0) + n_3C_{rep}(T_3) + C_{out}(T_3),\vspace{0.1 cm} &  & 
&  &  \\ 
C(t,br(4)) = CT(2,1) + n_3C_{rep}(T_1),\vspace{0.1 cm} &  &  &  &  \\ 
C(t,br(6)) = CT(2,2) + n_3C_{rep}(T_2),\vspace{0.1 cm} &  &  &  &  \\ 
C(t,br(8)) = CT(2,0) + n_3C_{rep}(T_0).\vspace{0.1 cm} &  &  &  &  \\ 
&  &  &  & 
\end{array}%
\end{equation*}

\n
We will compare the branches $br(3)$, $br(5)$ and $br(7)$, as well as the branches $br(4)$, $br(6)$ and $br(8)$. Once again we put conditions on $CT(2,0)$, $CT(2,1)$ and $CT(2,2)$, in order to be able to conclude on the costly programs. Below, we deal with only one case:
\begin{equation*}
CT(2,1) = \min( CT(2,0), CT(2,1), CT(2,2)).
\end{equation*}

\n 
When comparing the total costs of $br(3)$, $br(5)$ and $br(7)$, we obtain
\begin{equation*}
C(t,br(3)) < C(t,br(5)) \qquad \mbox{and} \qquad C(t,br(3)) < C(t,br(7)).
\end{equation*}
\n 
Thus, $br(5)$ and $br(7)$ are deleted, which is the same as removing the nodes $(n_{3},T_{3})$, which are the offspring of $(n_{2},T_{2})$ and $(n_{2},T_{0})$. Only one node $(n_{3},T_{3})$ is left at the third generation.
Since $C_{rep}(T_2) < C_{rep}(T_1) < C_{rep}(T_0)$, we have
\begin{equation*}
C(t,br(4))<C(t,br(8)).
\end{equation*}

\n 
Thus, we delete $br(8)$ and hence the node $(n_2,T_0)$, since we have previously removed the node $(n_3,T_3)$ (the descendant of $(n_2,T_0)$).  
Since $C_{rep}(T_{1})>C_{rep}(T_{2})$, we cannot make a conclusion about  $br(6)$.
Indeed,
\begin{equation*}
C(t,br(4)) - C(t,br(6)) = \underbrace{CT(2,1) - CT(2,2)}_{ < 0} + n_3%
\underbrace{(C_{rep}(T_1) - C_{rep}(T_2))}_{>0},
\end{equation*}

\n 
so we keep $br(6)$. Under the condition $CT(2,1)=\min (CT(2,0),CT(2,1),CT(2,2))$, the most expensive branches $br(5)$, $br(7)$ and $br(8)$, have been deleted. Thus, there remain only three PM schedules instead of eight, as depicted in Figure~\ref{figure8}.  

\begin{figure}
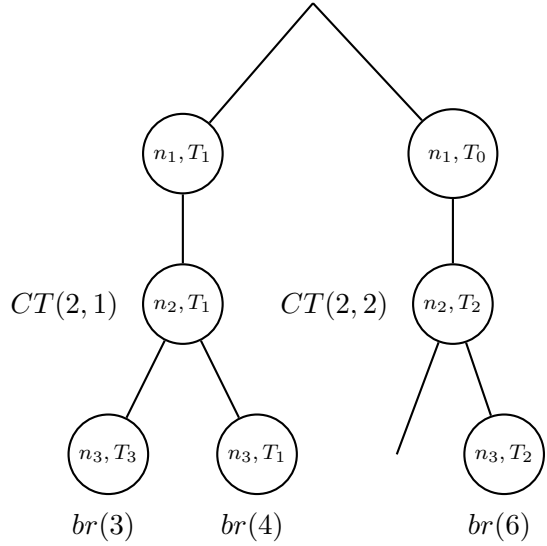

\begin{center}
\pstree{\Tr{}}
{ \pstree{\Tcircle{\scriptsize $n_1,T_1$}}
    { \pstree{\Tcircle{\scriptsize $n_2,T_1$}~[tnpos=l]{$CT(2,1)$}}{\Tcircle{\scriptsize $n_3,T_3$}~{$br(3)$} \Tcircle{\scriptsize $n_3,T_1$}~{$br(4)$}}}

   \pstree{\Tcircle{ \scriptsize $n_1,T_0$}}{\pstree{\Tcircle{\scriptsize $n_2,T_2$}~[tnpos=l]{$CT(2,2)$}} {\Tr{} \Tcircle{\scriptsize $n_3,T_2$}~{$br(6)$}} }
}
\caption{Final tree when $CT(1,1) > CT(1,0)$ and $CT(2,1) = \min\{ CT(2,0), CT(2,1), CT(2,2) \}$. }
\label{figure8}
\end{center}
\end{figure}

The last stage consists in determining the optimal solution for the fixed inspection time $t=T^*$. In order to do this, it suffices to evaluate the nodes of the last generation by adding to $CT(2,1)$, firstly $%
n_{3}C_{rep}(T_{3})+C_{out}(T_{3})$ (yielding the total cost of $br(3)$), and secondly $n_{3}C_{rep}(T_{1})$ (giving the total cost of $br(4)$). To get the total cost of $br(6)$, we add $n_{3}C_{rep}(T_{2})$ to $CT(2,2)$. The minimum value returns the optimum PM plan for the fixed inspection time $t=T^{\ast }$.

\begin{remark}
Figure~\ref{figure8} gives $S(3) = \{ 1,2,3 \}$.
\end{remark}

\begin{remark}
The third generation has been built with
\begin{enumerate}
\item The two offspring of the node $(n_{2},T_{1})$ (the node that gives the
minimum of $CT(2,i)$, $i=0,1,2$): 
$ (n_{3},T_{3})$ and $(n_{3},T_{1}). $

\item The offspring of the node $(n_{2},T_{2})$ (which satisfies $C_{rep}(T_{2})<C_{rep}(T_{1})$):  
$ (n_{3},T_{2})$. 
\end{enumerate}
\end{remark}

\begin{remark}
At each generation $i$, we build at most $(i+1)$ nodes $(n_{i},T_{j})$
for $j=0,\ldots ,i$. For example, Figure~\ref{figure6} shows that the second generation is composed of
nodes $(n_{2},T_{0})$, $(n_{2},T_{1})$ and $(n_{2},T_{2})$, and Figure~\ref{figure8}
shows that the third generation has been built with nodes $(n_{3},T_{1})$, $(n_{3},T_{2})$ and $(n_{3},T_{3})$.
\end{remark}

Let us summarize the building strategy related to the example introduced in Section~\ref{sub sec algo}. We have initially fixed the inspection time as $t=T^*$, and we wish to build a tree with three generations, since $N_t = 3$. The first generation $\{ (n_1,T_1), (n_1, T_0) \}$ is constructed and evaluated with $CT(1,1)$ and $CT(1,0)$. The node with the lowest cost generates its two children. The other node generates its descendant, provided that its repair cost calculated at its deadline is less than the repair cost calculated at the deadline of the least expensive node. The result of this first step is the second generation. At the end of the second stage, the third generation is built in the same manner as before, and gives the final tree containing three generation. The aim of the third and final stage is to determine the optimal repair plan for the given inspection time; for this, it suffices to evaluate the nodes at the last generation. The branch that returns the least expensive cost is the best repair schedule. Recall that this PM plan is not necessarily the global optimum, it could be a local minimum. In order to obtain the global optimum, we have to build trees related to inspection times $T_2 -1$ and $T_3 -1$ (the case $t=T_1 -1$ is trivial because $N_{t} = 0$, i.e., there are no repairs, hence no tree). Each of such tree returns the best repair program. To get the (global) PM schedule, we have to compare the total costs of these best repair plans. As a result, with our algorithm,  pipeline managers have the opportunity to choose one of these PM schedules. If they choose to inspect at year $T^*$, the algorithm will output the least expensive repair plan for this choice. The general algorithm is presented in the next section.

\subsection{Algorithm}
\label{section:algorithm}

The following algorithm generalizes the previous example. It provides a search tree with $j-1$ generations for a fixed inspection time $t=T_{j} - 1$. The construction of the remaining trees follows the same algorithm. At the end, we obtain $N+1$ PM schedules, i.e., an inspection time together with the best related repair plan; the global minimum is obtained by comparing the total costs of these $N+1$ repair plans. \vspace{0.5cm}
\begin{algorithm2e}
\KwResult{Optimal PM schedule for an inspection at year $t=T_j - 1$.}
\textbf{Initialization}: Construction of the first generation $\{(n_{1},T_{1}),(n_{1},T_{0})\}$\;
\hspace*{2.6cm} Evaluation of partial repair programs $CT(1,1)$ and $CT(1,0)$\;
\textbf{Step 1: second generation}\;
 \eIf{$\boldsymbol{CT(1,1) < CT(1,0)}$}{
Both children of node $(n_1,T_1)$, i.e., $\{(n_{2},T_{2}),(n_{2},T_{1})\}$ are constructed\;
 \If{$C_{rep}(T_{0})>C_{rep}(T_{1})$}{it is not necessary to build the
offspring of $(n_{1},T_{0})$}
The second generation is built\;
}
{
Both descendants of $(n_{1},T_{0})$, i.e., $\{(n_{2},T_{2}),(n_{2},T_{0})\}$ are built\;
\If{$C_{rep}(T_{0})>C_{rep}(T_{1})$}{ the descendant of $(n_{1},T_{1})$, i.e., $(n_{2},T_{1})$, is built;}
The second generation is built;
}
\textbf{Step i: $\boldsymbol{(i+1)}$th generation, $\boldsymbol{1< i < j-1}$}\;
\hspace*{1.3cm} The $i$th generation was built at the previous step: $\{(n_{i},T_{l}),l\in S(i)\}$\;
\If{$\boldsymbol{CT(i,\tilde{l})=\min \{CT(l),l\in S(i)\}}$}{
Both children of node $(n_1,T_{\tilde{l}})$ are constructed, i.e., $\{(n_{i+1},T_{i+1}),(n_{i+1},T_{\tilde{l}})\}$\;
\While{$\boldsymbol{C_{rep}(T_k) < C_{rep}(T_{\tilde{l}})}$, $\boldsymbol{k \in S(i) - \{\tilde{l}}$\}}{Construction of the descendant of $(n_k,T_k)$, i.e.,$(n_{i+1},T_{k})$}
The $(i+1)$th generation is built\;
}
\textbf{Step $\boldsymbol{j-1}$: Evaluation of full repair plans}\;
\hspace*{1.9cm} The best PM schedule for an inspection at $t=T_{j}-1$ is given.
\vspace{.4cm}

\caption{Construction of the tree with $j-1$ generations.}
\label{algorithm}
\end{algorithm2e}

This algorithm builds and estimates the costs of at most $(N+1)(N+2)/2 - 1$ repair schedules. Indeed, if $t=T_{1}-1$, the optimal program is that where there is no repair. In this case, the total cost coincides with the inspection cost evaluated at time $t$. If $t=T_{2}-1$, the algorithm builds and estimates the costs of at most two branches, and returns the best repair schedule, $b^*(T_{2}-1)$. If $t=T^{\ast }$, the algorithm builds and estimates the costs at most $N+1$ branches, and returns the best repair schedule, $b^*(T^*)$. Thus, the optimal maintenance program $b^{\ast }(t^{\ast })$ is  the best among $\left\lbrace b^*(T_2-1),\ldots ,b^*(T^*). \right\rbrace$

\section{Computational results}
\label{section4:simulations}

In this section, we present a large number of examples to confirm the efficiency of the proposed algorithm. The characteristics of the 11 examples are shown in Table~\ref{tab1}. The column labeled with $N$ corresponds to the number of distinct deadlines, the third column designates the deadlines, and the last column the numbers of defects associated with each deadline. In other words this table gives $11$ \textit{primary repair schedules} defined in Section~\ref{section1:problem description} (Definition~\ref{primary PM}). 
For example, the the fifth experience is composed of $7$ different deadlines. The first four are $T_1=2$, $T_2=5$, $T_3=8$ and $T_4=15$. At year $T_1=2$ there is $1$ repairs; at year $T_2=5$ there is $1$ repair; at year $T_3=8$, $1$ and at year $T_4=15$, $1$. Hence the primary is given by
\begin{align*}
\begin{split}
\mathscr{P} &= \left\lbrace (n_1=2,T_1=2), (n_2=1,T_2=5), (n_3=1,T_3=8), (n_4=1,T_4=15) \right. \vspace{0.3cm}\\
 \quad & \left. (n_5=6,T_5=24), (n_6=5,T_6=26), (n_7=4,T_7=28) \right\rbrace.
\end{split}
\end{align*}
\n For the above primary repair schedule, the PM problem aims at finding the next (after the primary inspection) optimal inspection $t^*$ within 
\begin{equation*}
\left\lbrace T_j - 1, \, j=1, \ldots , N, T^* \right\rbrace = \left\lbrace 1, 4, 7, 15, 24, 25, 27, 30 \right\rbrace,
\end{equation*}
and the associated repair schedule minimizing the operational cost within $\Delta t^*$; here $30$ designates the horizon time $T^*$. For this experience, our algorithm built at most $35 = (7+1)(7+2)/2-1$ repairs schedule and return $7$ maintenance programme; one for each epoch: $4, 7, 15, 24, 25, 27$, and $30$.
\vspace{0.3cm}

We want to look at the computational effort required to solve the problem \eqref{Opt Problem} when the number of deadlines $N$ increases. To do this, we will compare our algorithm, which we call \textit{tree algorithm}, with two other algorithms. The first, called the \textit{simplified algorithm}, solves \eqref{Opt Problem} and looks through complete repair plans when  candidates for the next inspection are in $\{ T_1-1, \ldots , T_N-1, T^*\}$. According to Section~\ref{section:algorithm}, this method investigates $(1-2^{N+1})/(1-2) - 1$ schedules. The second one, called the \textit{comprehensive algorithm}, solves the same problem but investigates $2^{N_t}$ schedules for each $t \in \{1,2,\ldots , T^*  \}$. This method provides an inefficient but complete set of PM schedules (inspection times and repair plans) which will serve as a benchmark for our algorithm and the simplified one. Both algorithms (simplified and comprehensive) output the best repair schedule for each inspection time $t \in \{ T_1-1, \ldots , T_N -1, T^* \}$ (and especially, the optimal PM plan $(t^*, \mathscr{P}^{*}_t)$). 
\vspace{0.3cm}

The experimental design consists of entering manually for the three algorithms, the number of different deadlines $N$, the values of deadlines, and the number of defects observed for each deadline. The parameters $T^{\ast }$,  discount rate, and inflation rate, are also entered manually and  respectively set as $T^* = 30$, $r_d = 8\%$, $r_i = 1\%$. The full costs are expressed in $k$\euro{} and are also manually entered. The initial inspection cost is $C_{insp}^{0}=500$, the initial repair cost is $C_{rep}^{0}=60$, and the initial out-of-service cost is $C_{out}^{0}=300$. We compare the performance of our algorithm with the comprehensive and  simplified algorithms. The computational tests were built with scilab 5.4.1 on a SONY computer with biprocessor, 2.30 GHz and 1 GB of RAM. The three algorithms were developed in a decision support framework. The tools provide a set of suitable solutions. Each solution is defined by a repair schedule, an inspection time belonging to $\{T_{j}-1,j=1,\ldots ,N\}\cup \{T^{\ast }\}$, and a total cost. Table~\ref{tab2} illustrates the results. It compares the running times (indicated by the CPU) and the optimal cost of the all $11$ experiments for each algorithm. 

When we compare the outputs of our algorithm with both the comprehensive and simplified algorithms, we see that all the algorithms return the same optimal cost for each experience, whereas the running time is much less than for the others.
Our algorithm was developed under a support decision framework. We have seen that its results can help a pipeline manager, in the sense that the algorithm provides a set of solutions that includes the optimal PM schedule. Table~\ref{tab3} below shows for $N=7$, all PM schedules, i.e., all repair schedules associated with $t \in \{T_{j}-1,j=1,\ldots,7 \}\cup \{ T^*\}$. 
The first column corresponds to the distinct inspection times, the second to the associated repair plans, and the last represents the total cost of these repair schedules within the inspection interval. 
We see that the optimal PM schedule suggests an inspection at year 23, for a total cost of $347.05704$ $k$\euro{}. The tool gives then to the pipeline manager the following repair program within the inspection interval $]0,23]$
\[\{(4,T_0),(0,T_1), (0,T_2),(0,T_3),(0,T_4)\}.\]
This repair schedule suggests to the practitioner to repair $4$ defect during the primary inspection (at $T_0 = 0$)) and no repair at years $T_1 = 2$, $T_2=5$, $T_3=8$ and $T_4 = 15$.
If the operator is not satisfied with this PM schedule, he has the opportunity to select another schedule. For example, he may choose to inspect later at year $27$. In this case, the tool proposes a repair schedule with a total cost equal to $514.11211$ $k$\euro{}, and the repair plan is given by
\[ \{(4,T_0),(0,T_1), (0,T_2), (0,T_3), (0,T_4), (11,T_5), (0,T_6) \},\]
which means that $4$ defect must be repair during the primary inspection and $11$ repairs should be made at year $T_5 =24$.

\begin{center}
\begin{tabular}{|c|c|l|l|}
\hline
Example N\textsuperscript{o} & $N$ & Deadlines $T_j$, $j=1, \ldots N$ & Number of defects \\ \hline
1  & 3  & 1,8,16 & 1,2,1 \\ \hline
2  & 4  & 4,6,12,22 & 2,3,1,1 \\ \hline
3  & 5  & 2,3,8,12,24 & 1,1,3,2,2 \\ \hline
4  & 6  & 5,6,8,15,19,25 & 1,1,3,3,2,1 \\ \hline
5  & 7  & 2,5,8,15,24,26,28 & 1,1,1,1,6,5,4 \\ \hline
6  & 8  & 4,7,8,11,13,21,25,27 & 1,1,2,1,1,3,3,1 \\ \hline
7  & 9  & 5,6,8,11,14,20,21,25,26 & 1,2,3,2,1,1,3,2,1 \\ \hline
8  & 10 & 3,5,6,7,13,18,20,22,25,26 & 1,2,1,2,3,3,1,1,1,1 \\ \hline
9  & 11 & 2,4,5,6,10,12,16,17,20,22,25 & 1,2,3,1,4,1,1,1,2,2,3 \\ \hline
10 & 12 & 2,4,5,6,7,9,10,11,18,20,24,26 & 1,1,2,3,2,2,1,1,1,3,2,1 \\ \hline
11 & 13 & 2,5,6,7,10,12,13,17,18,20,21,24,26 & 1,1,3,2,2,1,1,3,4,1,4,5,2 \\ 
\hline
\end{tabular}
\captionof{table}{Data for simulated examples.}
\label{tab1}
\end{center}

\begin{center}
\begin{tabular}{|l|cc|ccc|ccc|}
\hline
\multicolumn{1}{|c|}{$N$} & \multicolumn{2}{l|}{Comprehensive algorithm}  & & \multicolumn{2}{l|}{Simplified Algorithm} 
& & \multicolumn{2}{l|}{Tree Algorithm} 
\\ 
\cline{2-3} \cline{4-6} \cline{7-9}
& CPU & Optimal cost &  & CPU & Optimal cost & & CPU & Optimal Cost (\euro{})\\ 
\hline
3  & 0.60     & 306972.81  &  & 0.078  & 306972.81  & & 0.0156  & 306972.81 \\ 
4  & 1.61     & 408943.08  &  & 0.22   & 408943.08  & & 0.0156  & 408943.08 \\ 
5  & 3.26     & 432790.34  &  & 0.44   & 432790.34  & & 0.0312  & 432790.34 \\ 
6  & 9.39     & 382437.51  &  & 1.37   & 382437.51  & & 0.0312  & 382437.51 \\ 
7  & 24.60    & 347057.04  &  & 3.88   & 347057.04  & & 0.0468  & 347057.04 \\ 
8  & 54.40    & 394468.88  &  & 9.42   & 394468.88  & & 0.0624  & 394468.88 \\ 
9  & 123.27   & 382437.51  &  & 24.08  & 382437.51  & & 0.0624  & 382437.51 \\ 
10 & 272.24   & 437285.67  &  & 57.24  & 437285.67  & & 0.0780  & 437285.67 \\ 
11 & 545.78   & 467592.59  &  & 124.16 & 467592.59  & & 0.0780  & 467592.59 \\ 
12 & 1138.46  & 467592.59  &  & 278.82 & 467592.59  & & 0.1092  & 467592.59 \\ 
13 & 2656.50  & 442437.51  &  & 725.95 & 442437.51  & & 0.1248  & 442437.51 \\ 
\hline
\end{tabular}
\captionof{table}{Computational results.}
\label{tab2}
\end{center}

%

\begin{center}
\begin{tabular}{|c|l|c|}
\hline
Inspection time  & Repair within $\Delta t$ plans & Total cost \\ 
\hline
30  &  $\{(4,T_0),(0,T_1), (0,T_2), (0,T_3), (0,T_4), (15,T_5), (0,T_6), (0,T_7)  \}$ & 547256.39  \\
\hline
27   & $\{(4,T_0),(0,T_1), (0,T_2), (0,T_3), (0,T_4), (11,T_5), (0,T_6)  \}$          & 514112.11 \\ 
\hline
25   & $\{(4,T_0),(0,T_1), (0,T_2), (0,T_3), (0,T_4), (6,T_5) \}$                    & 465784.98\\ 
\hline
23   & $\{(4,T_0),(0,T_1), (0,T_2), (0,T_3), (0,T_4)  \}$  							 & 347057.04 \\ 
\hline
14   & $\{(53,T_0),(0,T_1), (0,T_2), (0,T_3) \}$                                      & 375675.59 \\ 
\hline
7    & $\{(2,T_0),(0,T_1), (0,T_2)  \}$                                              & 432790.34 \\ 
\hline
4   & $\{(1,T_0),(0,T_1) \}$                                                         & 442437.51\\ 
\hline
1  & $\{(0,T_0)\}$                                                                   & 467.59259\\ 
\hline
\end{tabular}
\captionof{table}{Data for simulated examples when $N=7$.}
\label{tab3}
\end{center}

\section{Conclusion}
\label{section:conclusion}

This paper focused on the PM problem in gas pipelines from the economics point of view. We have modeled this problem using binary integer nonlinear programming, which drastically shrinks the possible set of good repair schedules, by exploiting the problem's properties. In order to solve the problem, we have proposed an algorithm, based on dynamic programming, which finds  the exact solution extremely quickly, along with a set of alternative solutions. As a result, managers of gas pipeline systems can consider various possible schedules, and choose alternative PM programs, and our algorithm assists them in making the most suitable decision in a short period of time.

\bibliographystyle{plain}
\bibliography{bibliography}

\end{document}